\newcommand{\comment}[1]{}
\newcommand{\ket}[1]{| #1 \rangle}
\newcommand{\bra}[1]{\langle #1 |}
\newcommand{\ketbra}[2]{|#1\rangle\!\langle#2|}
\newcommand{\proj}[1]{{|#1\rangle\!\langle#1|}}
\newcommand{\id}{\mathbb{I}}
\newcommand{\cA}{\mathcal{A}}
\newcommand{\cD}{\mathcal{D}}
\newcommand{\cE}{\mathcal{E}}
\newcommand{\cH}{\mathcal{H}}
\newcommand{\cI}{\mathcal{I}}
\newcommand{\cM}{\mathcal{M}}
\newcommand{\cP}{\mathcal{P}}
\newcommand{\cR}{\mathcal{R}}
\newcommand{\cS}{\mathcal{S}}
\newcommand{\ot}{\otimes}
\newcommand{\eps}{\epsilon}
\newcommand{\deriv}[1]{\frac{\mathrm{d}}{\mathrm{d}#1}}
\newcommand{\nimplies}{\kern.6em\not\kern -.6em \implies}
\newcommand{\Tr}{{\rm Tr}}
\newcommand{\Tilt}{{\rm Tilt}}
\newcommand{\B}{\textnormal{Box}}
\theoremstyle{plain}
\newtheorem{theorem}{Theorem}
\newtheorem{lemma}[theorem]{Lemma}
\newtheorem{corollary}[theorem]{Corollary}
\newtheorem{proposition}[theorem]{Proposition}
\theoremstyle{definition}
\newtheorem{definition}[theorem]{Definition}
\newtheorem{remark}[theorem]{Remark}
\begin{document}

\title{Keyring models: an approach to steerability}

\date{$9^{\mathrm{th}}$ February 2018}

\author{Carl A.~\surname{Miller}}
\affiliation{National Institute of Standards and Technology, 100 Bureau Dr., Gaithersburg, MD 20899, USA}
\affiliation{Joint Center for Quantum Information and Computer Science, University of Maryland, College Park, MD 20742, USA}
\email{camiller@umd.edu}

\author{Roger \surname{Colbeck}}
\affiliation{Department of Mathematics, University of York,
  York, YO10 5DD, UK}
\email{roger.colbeck@york.ac.uk}

\author{Yaoyun \surname{Shi}}
\affiliation{Aliyun Quantum Laboratory, Alibaba USA, Bellevue, WA 98004, USA}
\email{y.shi@alibaba-inc.com}

\begin{abstract}
  If a measurement is made on one half of a bipartite system, then,
  conditioned on the outcome, the other half has a new reduced
  state. If these reduced states defy classical explanation|that
  is, if shared randomness cannot produce these reduced states for all
  possible measurements|the bipartite state is said to be
  \textit{steerable}. Determining which states are steerable is a
  challenging problem even for low dimensions. In the case of
  two-qubit systems a criterion is known for $T$-states (that is,
  those with maximally mixed marginals) under projective
  measurements. In the current work we introduce the concept of
  \textit{keyring models}|a special class of local hidden state
  models. When the measurements made correspond to real projectors,
  these allow us to study steerability beyond $T$-states.

  Using keyring models, we completely solve the steering problem for
  real projective measurements when the state arises from mixing a
  pure two-qubit state with uniform noise.  We also give a partial
  solution in the case when the uniform noise is replaced by
  independent depolarizing channels.
\end{abstract}

\maketitle

\section{Introduction}

In his 1964 paper~\cite{Bell} John Bell made the fundamental
observation that measurement correlations exhibited by some entangled
quantum states cannot be explained by any local causal model.
Specifically, if $\rho_{AB}$ is the state of a bipartite system shared
by Alice and Bob, and Alice is given a private input
$q\in\mathcal{Q}$ and Bob is given a private input
$s\in\mathcal{S}$, then it is possible for Alice and Bob to measure
$\rho_{AB}$ and produce output messages $a\in\mathcal{A}$ and
$b\in\mathcal{B}$ such that the conditional probability distribution
$\mathbf{P}(ab\mid qs)$ cannot be simulated by any local hidden
variable (LHV) model.

This can be interpreted as a fundamental confirmation of the models
for nonlocality used in quantum physics, and it also has important
applications in information processing.  Device-independent quantum
cryptography is based on the observation that if two untrusted
input-output devices exhibit nonlocal correlations, their internal
processes must be quantum.  With correctly chosen protocols and
mathematical proof, this observation allows a classical user to
manipulate the devices to perform cryptographic tasks and at the
same time verify their security~\cite{Mayers:1998,BHK}.

In 2007, the related notion of quantum steering was
distilled~\cite{Wiseman:2007}, in which, rather than having Bob make a
measurement, we directly consider the subnormalized marginal states
$\tilde{\rho}_B^{\, q,a}$ that he holds when Alice receives input $q$
and produces output $a$.  A local hidden state (LHS) model attempts to
generate these using shared randomness. Denoting the shared randomness
by a random variable $\lambda$, distributed according to probability
distribution $\mu(\lambda)$, Bob can output quantum state
$\sigma_\lambda$, while Alice outputs $a$ according to a probability
distribution $\mathbf{P}_{q,\lambda}(a)$.

Suppose when Alice gets input $q$ she performs a POVM
$\{E^q_a\}_{a\in\cA}$, so that
$\tilde{\rho}_B^{\, q,a}=\Tr_A((E^q_a\ot\id_B)\rho_{AB})$. A LHS model
produces a faithful simulation if
$\tilde{\rho}_B^{\,
  q,a}=\int_\lambda\mathbf{P}_{q,\lambda}(a)\sigma_\lambda\
\mathrm{d}\mu(\lambda)$
for all $q$ and $a$.  If such a model exists, then we say that the
state $\rho_{AB}$ is unsteerable for the family of measurements
$\{\{E^q_a\}_{a\in\cA}\}_{q\in\mathcal{Q}}$.  If a LHS model
exists for all possible measurements Alice could do (i.e., all POVMs),
we say $\rho_{AB}$ is unsteerable.  Conversely, if there exists a set
of measurements for which no LHS model exists, then $\rho_{AB}$ is
said to be steerable.

One can think of steering as an analog of non-locality for the case
where one party (Bob) trusts his measurement device (and hence in
principle could do tomography to determine his marginal state after
being told Alice's measurement and outcome).  It is hence a useful
intermediate between entanglement witnessing (both measurement devices
trusted) and Bell violations (neither trusted) and has applications
such as one-sided device-independent quantum
cryptography~\cite{Branciard:2012} and (sub)channel
discrimination~\cite{Piani:2015}.  Exhibiting new steerable states
offers an expanded toolbox for such problems.

The steering decision problem is to determine whether or not a given
state is steerable.  This problem has proved to be difficult even for
$2$-qubit systems.  To understand why this is so, consider a two-qubit
state $\rho_{AB}$.  If Alice were to measure $\{\proj{0},\proj{1}\}$
on input $q=0$ and $\{\proj{+},\proj{-}\}$ on input $q=1$ (where
$\ket{\pm}=(\ket{0}\pm\ket{1})/\sqrt{2}$), then it is possible for Bob
to obtain one of four subnormalized states which we denote
$\tilde{\rho}_B^{\, 0}, \tilde{\rho}_B^{\, 1}, \tilde{\rho}_B^{\, +},
\tilde{\rho}_B^{\, -}$
(where, for example,
$\tilde{\rho}_B^{\, 0}=\Tr_A[(\proj{0}\ot\id_B)\rho]$).  Determining
whether a LHS model exists for these four states is a search over a
finite-dimensional space and is not difficult
(see~\cite{Cavalcanti,Hirsch} for techniques for searching for LHS
models).  Next suppose Alice additionally performs the measurement
$\{\proj{\pi/4},\proj{5\pi/4}\}$ for input $q=2$, where
\begin{equation}\label{eq:theta}
\ket{\theta}:=\cos\frac{\theta}{2}\ket{0}+\sin\frac{\theta}{2}\ket{1}\, ,
\end{equation}
leading to states $\tilde{\rho}_B^{\, \pi/4},\tilde{\rho}_B^{\, 5\pi/4}$.
There is no guarantee that a local hidden state model that simulates
the previous four states will simulate this new pair as well
(generally, the states
$\tilde{\rho}_B^{\, \pi/8},\tilde{\rho}_B^{\, 5\pi/8}$ are not in the convex
hull of the former states).  A new search for local hidden state
models is required, and the search space increases exponentially with
each new measurement.  Thus a direct approach|even when just
dealing with measurements of the form
$\{\proj{\theta},\proj{\theta+\pi}\}$|is unlikely to be feasible.

Previous work on steering has achieved success by exploiting the
symmetries of certain classes of states.  For the class of Werner
states~\cite{Werner}
$\left\{\rho_{AB}(\eta)\mid\eta\in[0,1]\right\}$ given by
\begin{eqnarray}\label{eq:werner}
\rho_{AB}(\eta)&=&\eta\proj{\Phi_+}+(1-\eta)\id/4,
\end{eqnarray}
where $\ket{\Phi_+}=\frac{1}{\sqrt{2}}(\ket{00}+\ket{11})$, an exact
classification of $P$-steerability (i.e., steerability for all
projective measurements) has been performed (see
Appendix~\ref{wernersubsec} for a summary of results on Werner
states).  More recently a complete classification of $P$-steerability
for $T$-states (i.e., states for which $\rho_A$ and $\rho_B$ are
maximally mixed) has been
given~\cite{Jevtic:2015,Nguyen:PRA,Nguyen:2016}. [Note that the
requirement on $\rho_B$ can be dropped|see Lemma~\ref{lem:M} below.]
In both cases the methods depend critically on the symmetry of the
states.  For $2$-qubit states outside the family of $T$-states,
partial results on steerability exist
(e.g.,~\cite{Jones07,Bowles:2016}) but a full classification is not
known.

In the current work, we develop new techniques to decide steerability
in the case where $\rho_A$ is not maximally mixed.  We study Real
Projective (RP)-steerability (i.e., steerability by the family of all
measurements of the form $\{\proj{\theta},\proj{\theta+\pi} \}$) for
real two-qubit states. [A state is \emph{real} if its matrix
  elements are real in the $\{\ket{0},\ket{1}\}$ basis.]  To
illustrate our techniques, we give a complete classification of
RP-steerability for the class of states,
$\left\{\rho_{AB}(\alpha,\eta)\right\}$, formed by mixing partially
entangled pure states with uniform noise, i.e.,
\begin{eqnarray}
\rho_{AB}(\alpha,\eta)=\eta\proj{\phi_\alpha}+(1-\eta)\id/4\, ,
\end{eqnarray}
where $\ket{\phi_\alpha}=\cos\alpha\ket{00}+\sin\alpha\ket{11}$.  The
classification is shown in Figure~\ref{introfig}, where the
shaded/unshaded region represents the states that are
unsteerable/steerable for real projective measurements. As a special
case we recover the existing result~\cite{Jones11,Uola16} that Werner
states are RP-steerable if and only if $\eta>2/\pi$ (see
Theorem~\ref{thm:Werner}).

Our criterion also applies to a larger class of real $2$-qubit
states, specifically, all states whose steering ellipse is tilted at
an angle less than $\pi/4$|see Theorem~\ref{steerthm2} and
Corollary~\ref{cor:main} for the formal statements.  To achieve this
classification we introduce the concept of \textit{keyring models},
which are a geometrically motivated class of local hidden state models
for one-dimensional families of measurements.  We explain these in
more detail in the next subsection.

Our approach invites generalizations.  In its current form we have a
criterion for steerability among all real $2$-qubit states whose
steering ellipse is tilted at an angle less than $\pi/4$.  With
additional work one may be able to go further identify the set of all
$RP$-steerable real $2$-qubit states.  Additionally, the keyring
approach could be applied in more general scenarios where steering is
attempted with any one-dimensional family of measurements.

\begin{figure}[h]
\begin{center}
\includegraphics[width=0.4\textwidth]{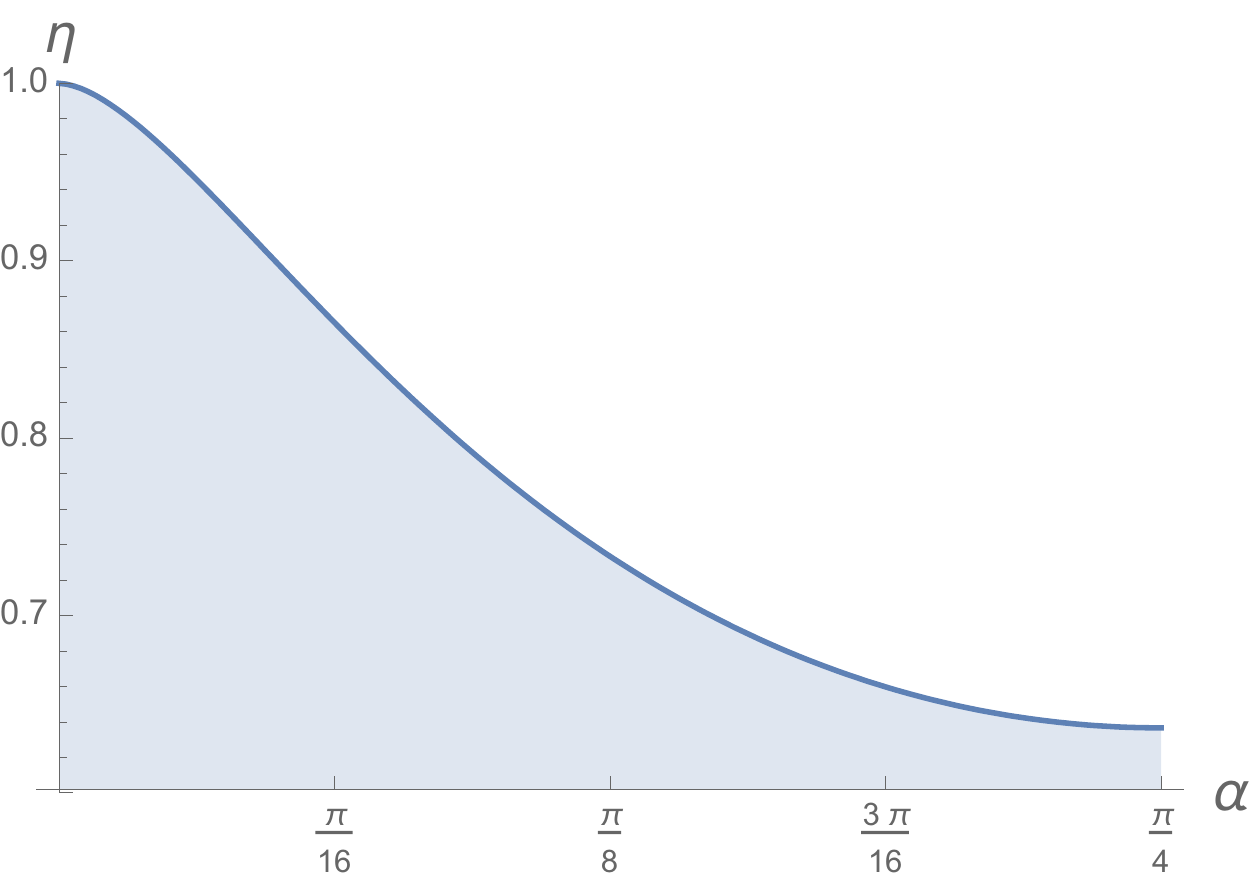}
\end{center}
\caption{$RP$-unsteerability for the states
  $\rho_{AB}(\alpha,\eta)$. In the shaded region the states are
  unsteerable under real projective measurements, while above it they
  are steerable. (Note that the shaded region extends to $\eta=0$.)}
\label{introfig}
\end{figure}

Studying the behavior of qubit states under real projective
measurements is a natural problem for experimental setups in which
measurements in one plane of the Bloch sphere are easier than the most
general measurements. However, another future goal would be to extend
our methods to arbitrary complex measurements on $2$-qubit states.
This looks more challenging|steering with a $2$-dimensional family
of measurements is considerably harder than with a $1$-dimensional
family of measurements|but if it can be accomplished, it would be
an important step towards a complete criterion for steering among
arbitrary $2$-qubit states.

Keyring models can also be used to construct a class of LHV models if
we also use a (classical) function on Bob's side to map his
input and the hidden state to his output. They can hence be applied to
the related problem of classically simulating bipartite correlations
and may, for example, be useful for shedding new light on the problem
of identifying the smallest detector efficiency for observing Bell
inequality violations.  We hope to find further applications of
keyring models in this direction.

\subsection{Sketch of the proof techniques}
The difficulty in establishing steerability over all measurements is
the need to rule out \emph{all} LHS models.  Our proof begins with the
observation that, in the case where $\rho_{AB}$ is a real $2$-qubit
state and where the set of measurements comprises real projective
measurements (i.e., those of the form
$\{\proj{\theta},\proj{\theta+\pi}\}$), a more tractable (though still
infinite dimensional) class of LHS models suffices.  Specifically, we
consider a class of LHS models that we call ``keyring models'', which
we now define.

Let $\mathbb{RP}^1$ denote the set of all real one-dimensional
projectors on $\mathbb{C}^2$ (i.e., the set
$\{\proj{\theta}\}_{\theta\in[0,2\pi)}$).  A keyring model is a pair
$(\mu,\{f_\theta\}_\theta)$, where $\mu$ is a probability distribution
on $\mathbb{RP}^1$, and $f_\theta\colon\mathbb{RP}^1\to[0,1]$ is a
two-step function|that is, roughly speaking, a function that takes
two possible values and switches between them at two elements of
$\mathbb{RP}^1$ (see Definition~\ref{twostepdef}).  The word
``keyring'' refers to the configuration of the two switching points on
$\mathbb{RP}^1$ as $\theta$ varies.  An example configuration is shown
in Figure~\ref{keyringfig}.  (This definition is related to the local
hidden state models
of~\cite{Jones11,Jevtic:2015,Nguyen:PRA,Nguyen:2016}, which are based
on functions on $\mathbb{RP}^2$ that are supported on half-spheres.
One key difference in the definition of a keyring model is that there
is no uniformity in the positioning of the switching points of the
functions $f_\theta$|they need not be diametrically opposite.)

\begin{figure}[h]
\begin{center}
\includegraphics[width=0.38\textwidth]{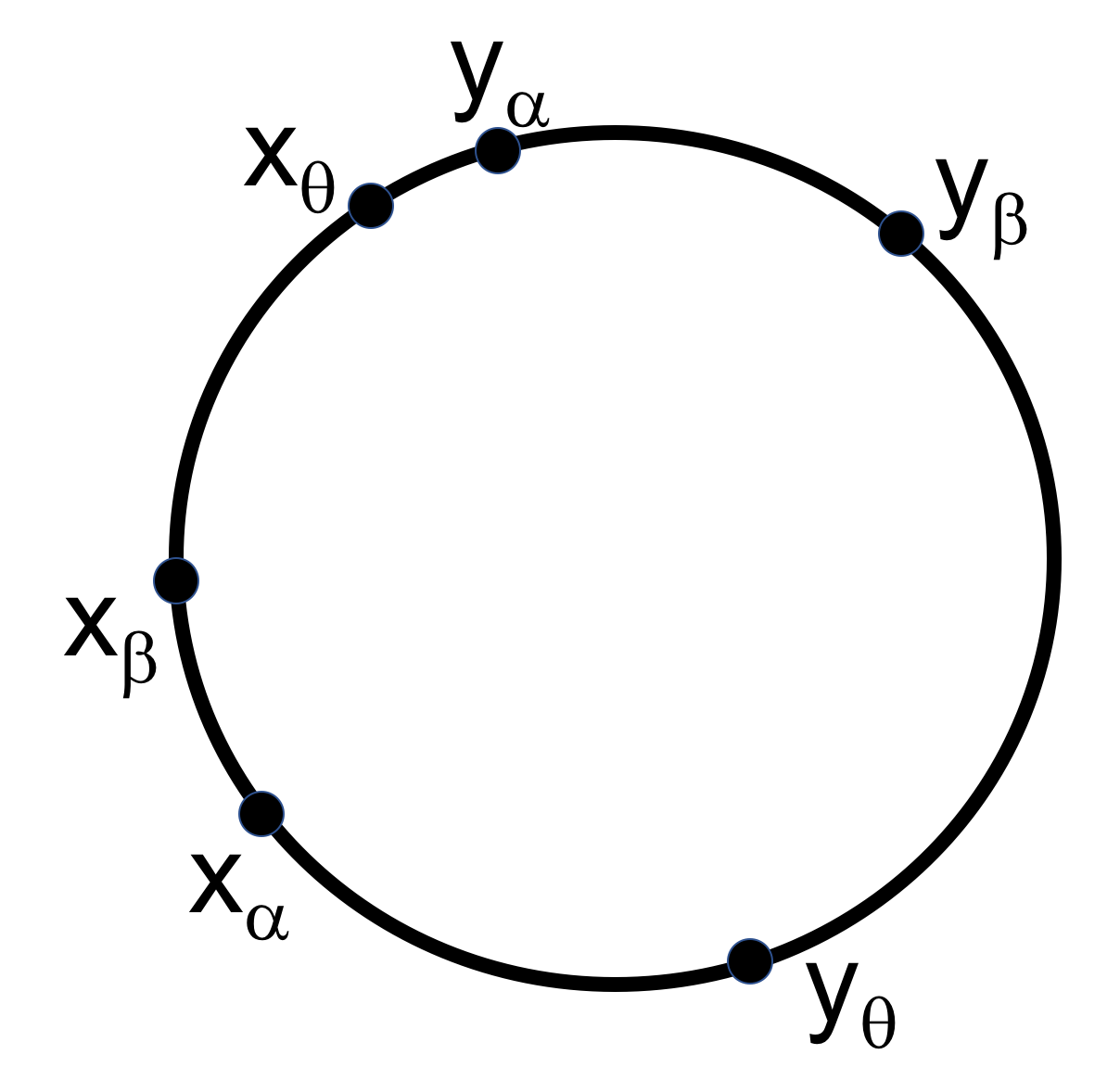}
\end{center}
\caption{An example configuration of endpoints in a keyring
  distribution.  For every point in $\mathbb{RP}^1$ there are
  two associated endpoints in $\mathbb{RP}^1$.  Here three pairs of
  endpoints are illustrated with $(x_{\alpha},y_{\alpha})$ being the
  end points for $\alpha$, for example.}
\label{keyringfig}
\end{figure}

We show that $\rho_{AB}$ is RP-steerable if and only if
it can be simulated by a keyring model.  Denoting the subnormalized
reduced states on Bob's side by
$\tilde{\rho}_B(\theta)=\Tr_A\left((\proj{\theta}\ot\id_B)\rho_{AB}\right)$,
this is equivalent to the requirement
\begin{eqnarray}
\tilde{\rho}_B(\theta)&=&\int_{x\in\mathbb{RP}^1}xf_\theta(x)\ \mathrm{d}\mu
\end{eqnarray}
for all $\theta$.
From this we can conclude that that if the circumference of the
steering ellipse $\{\tilde{\rho}_B(\theta)\}$ is greater than $2$, i.e.,
\begin{eqnarray}
\label{circum1}
\int_{\mathbb{RP}^1}\left\|\deriv{\theta}\tilde{\rho}_B(\theta)\right\|_1\mathrm{d}\theta&>&2,
\end{eqnarray}
where $\|\cdot\|_1$ is the trace norm, then the state $\rho_{AB}$ has no local hidden state model
(see Proposition~\ref{prop:circum}).

At this point our proof diverges from that
of~\cite{Jones11,Jevtic:2015,Nguyen:PRA,Nguyen:2016}, since the converse of the above
statement is not true in our case: if (\ref{circum1}) fails to hold,
there could still be no local hidden state model.  However, the
following stronger condition guarantees the existence of a local
hidden state model:
\begin{eqnarray}
\label{circum2}
\int_{\mathbb{RP}^1}\left|\deriv{\theta}\tilde{\rho}_B(\theta)\right|\mathrm{d}\theta&\leq&2\rho_B,
\end{eqnarray}
where $|X|=\sqrt{X^\dagger X}$ is the absolute value of the operator.
Moreover, the state $\rho_{AB}$ is steerable if and only if
\begin{eqnarray}
\label{rhoprime}
\rho'_{AB}:=(\id_A\ot Y)\rho_{AB}(\id_A\ot Y)
\end{eqnarray}
is steerable for all positive definite $Y$ (see Lemma~\ref{lem:M}),
and by substituting in $\rho'_{AB}$ for $\rho_{AB}$ in (\ref{circum1})
and (\ref{circum2}) we obtain an infinite family of criteria for
$RP$-steerability and $RP$-unsteerability.  We thus need to find a $Y$
such that one of (\ref{circum1}) and (\ref{circum2}) holds for
$\rho'_{AB}$.

The most technically difficult part of our proof then shows that there
must exist a positive definite density matrix $Y$ such that
\begin{eqnarray}
\label{compels}
Y^{-1}\left[\int_{\mathbb{RP}^1}\left|Y\left(\deriv{\theta}\tilde{\rho}_B(\theta)\right)Y\right|\mathrm{d}\theta\right]Y^{-1}
\end{eqnarray}
is a scalar multiple of $\rho_B$.  This compels~\eqref{compels} to
either be greater than, or less than or equal to $\rho_B$, and thus we
achieve a criterion for steering which is both necessary and
sufficient.  We prove this by demonstrating that if we let $Y$ tend to
any projector $P$ in $\mathbb{RP}^1$, then~\eqref{compels} must tend
to an operator proportional to the orthogonal projector $\widehat{P}$.
Any continuous map from a $2$-dimensional disc to itself which rotates
the boundary of the disc must be an onto map, and this gives the
desired result.  (The proof of the aforementioned limit assertion is
surprisingly subtle|it turns out that the rate at which the
normalization of~\eqref{compels} approaches $\widehat{P}$ is only
logarithmic.)

Theorem~\ref{steerthm2} gives a formal statement of our main
result. To apply the criteria (e.g., to obtain Figure~\ref{introfig}),
we use numerical computations to find the appropriate operators $Y$
from a given state $\rho_{AB}$.

\section{Preliminaries}

\subsection{Notation and Definitions}

For any Hilbert space $\cH$, let $\mathcal{A}(\cH)$ denote the set of
all Hermitian operators on $\cH$, $\mathcal{P}_\geq(\cH)$ be the set
of positive semidefinite operators on $\cH$, $\mathcal{P}_>(\cH)$ be
the set of positive definite operators on $\cH$, $\mathcal{D}(\cH)$
denote the set of all density operators on $\cH$, and
$\mathcal{D}_>(\cH)$ denote the set of all positive definite density
operators on $\cH$. Let $\mathcal{RA}(\cH),\mathcal{RP}_\geq(\cH)$
etc.\ denote the respective subsets of real operators (an operator $X$
is real if $\bra{i}X\ket{j}\in\mathbb{R}$ for all $i,j$, where
$\{\ket{i}\}$ is the standard basis).  If
$A,B\in\mathcal{P}_{\geq}(\cH)$ we write $A\geq B$ to mean that
$A-B\in\cP_{\geq}(\cH)$ and $A\ngeq B$ for the complement of this.
For an operator $X$ on $\cH$ we use $|X|:=\sqrt{X^\dagger X}$ and
$\|X\|_1:=\Tr|X|$, the latter being the \emph{trace norm} of $X$. If
$\Tr(X)\neq 0$, we use $\left<X\right>$ to denote the normalized
version of $X$, i.e., $\left<X\right>:=X/\Tr(X)$. In addition, if $Y$
is also an operator on $\cH$, then we use
$\langle X,Y\rangle:=\Tr(X^\dagger Y)$.

Throughout this paper, we take
$\mathcal{H}_A=\mathcal{H}_B=\mathbb{C}^2$ to be qubit systems
possessed by Alice and Bob and use
$\mathbb{RP}^1\subseteq\mathcal{RD}(\cH)$ to denote the set of
one-dimensional real projectors on $\mathbb{C}^2$.

\subsubsection{The steering ellipse}
Any operator $\lambda\in\mathcal{RA}(\mathbb{C}^2)$ can be
expressed uniquely in terms of real numbers $n$, $r_1$, $r_3$ as 
\begin{eqnarray}
\lambda=\frac{1}{2}(n\id+r_1\sigma_1+r_3\sigma_3)\, ,
\end{eqnarray}
where $\sigma_1=\ketbra{0}{1}+\ketbra{1}{0}$ and
$\sigma_3=\proj{0}-\proj{1}$ are the usual Pauli operators. Note that
$\lambda\in\mathbb{RP}^1$ if and only if $n=1$ and $r_1^2+r_3^2=1$.

The \emph{tilt} of $\lambda$, denoted $\Tilt(\lambda)$, is the
quantity $\sqrt{r^2_1+r^2_3}/|n|$ (if $n=0$, the tilt is $\infty$).
The \emph{tilt angle} of $\lambda$ is $\arctan (\Tilt(\lambda))$.  If
we think of $(n,r_1,r_3)$ as $3$-dimensional Cartesian coordinates,
then the tilt angle of $\lambda$ is angle that it forms with the
$(1,0,0)$ axis.  We use these coordinates when we sketch steering
ellipses later in this work. Note that an operator is positive
semidefinite if and only if $n\geq 0$ and its tilt is less than or
equal to $1$.  It is useful to note that
\begin{equation}\label{eq:1norm}
\|\lambda\|_1=\left\{\begin{array}{ll}|n|&\text{if }\Tilt(\lambda)\leq 1\\
\sqrt{r_1^2+r_3^2}&\text{if }\Tilt(\lambda)>1\end{array}\right.
\end{equation}

Let $\rho_{AB}\in\mathcal{RD}(\mathbb{C}^2\ot\mathbb{C}^2)$.  Then, the \textit{steering ellipse}
of $\rho_{AB}$ on $B$ is the function $\tilde{\rho}_B\colon\mathbb{RP}^1\to\mathcal{P}_\geq(\mathbb{C}^2)$ given by
\begin{eqnarray}
  \tilde{\rho}_B(\theta)&:=&\Tr_A\left[\left(\proj{\theta}\ot\id_B\right)\rho_{AB}\right]\, ,
\end{eqnarray}
where $\ket{\theta}$ is defined in~\eqref{eq:theta}.  Note that
$\{\ket{\theta},\ket{\theta+\pi}\}$ form an orthonormal basis, so
$\rho_B=\tilde{\rho}_B(\theta)+\tilde{\rho}_B(\theta+\pi)$ for any
$\theta$.  (In the more general case of arbitrary projective
measurements, the states on Bob's side are a two-parameter family that
define an ellipsoid rather than an ellipse. Note also that the term
``steering ellipsoid'' is used to refer to the set of normalized
states in the literature~\cite{JPJR,Jevtic:2015}, while our steering
ellipse comprises subnormalized states).
\begin{definition}
  Let $\rho_{AB}\in\mathcal{RD}(\mathbb{C}^2\ot\mathbb{C}^2)$.  Then, the
  \emph{tilt of the steering ellipse of $\rho_{AB}$} is the equal to
  the tilt of any nonzero vector that is normal to the $2$-dimensional
  affine space that contains the steering ellipse of
  $\rho_{AB}$. [If the steering ellipse does not span a
    $2$-dimensional affine space (i.e., it is degenerate) then we say
    that its tilt is equal to $\infty$.]
\end{definition}
Note that if the tilt of the steering ellipse is less than or equal to
$1$, then no element of the steering ellipse is strictly greater (in
the positive semidefinite sense) than any other.  This is a
consequence of Lemma~\ref{lem:perptilt} in the appendix.

\subsubsection{Local hidden state models}
In this section we give a definition of a local hidden state model. It
is not the most general definition possible, but it suffices for our
purposes because of the form of the steering problem we are
considering, as we now explain.

In the most general sense, a local hidden state model for a set of
real $2$-qubit subnormalized states
$\left\{\tilde{\rho}_B^{\, q,a}\right\}_{q\in\mathcal{Q},a\in\cA}$ is
a probability distribution $\mu$ on $\mathcal{D}(\mathbb{C}^2)$ and
set of functions
$\left\{f_{q,a}\colon\mathcal{D}(\mathbb{C}^2)\to[0,1]\right\}_{q,a}$
with $\sum_{a\in\cA}f_{q,a}(x)=1$ such that
\begin{eqnarray}
\tilde{\rho}_B^{\, q,a}&=&\int_{x\in\mathcal{D}(\mathbb{C}^2)}xf_{q,a}(x)\ \mathrm{d}\mu\, .
\end{eqnarray}
(To connect with the earlier description, $f_{q,a}(x)$ is the probability that Alice gives the outcome $a$ for measurement $q$ when the hidden variable takes the value $x$.)
However, via the map $\mathcal{D}(\mathbb{C}^2)\to\mathcal{RD}(\mathbb{C}^2)$ given
by $x\mapsto(x+\overline{x})/2$, we may assume $\mu,f_{a,q}$
have support $\mathcal{RD}(\mathbb{C}^2)$, and by decomposing each operator in $\mathcal{RD}(\mathbb{C}^2)$
into a convex combination of one-dimensional projectors, we may further assume
that $\mu,f_{q,a}$ have support $\mathbb{RP}^1$.  We are thus led to the following definition.
\begin{definition}
  A \emph{local hidden state model} for a set
  $\left\{\tilde{\rho}_B^{\,
      q,a}\right\}_{q\in\mathcal{Q},a\in\cA}\subseteq\mathcal{RP}_\geq(\mathbb{C}^2)$
  is a pair $(\mu,\{f_{q,a}\}_{q,a})$ such that $\mu$ is a probability
  distribution on $\mathbb{RP}^1$,
  $f_{q,a}\colon\mathbb{RP}^1\to[0,1]$ with
  $\sum_{a\in\cA}f_{q,a}(x)=1$ for all $q$, and
\begin{eqnarray}
\tilde{\rho}_B^{\, q,a}&=&\int_{x\in\mathbb{RP}^1}xf_{q,a}(x)\ \mathrm{d}\mu\, .
\end{eqnarray}
\end{definition}
In the case of steering for real 2-qubit states under real projective
measurements, it suffices to consider whether we can find
$(\mu,\{f_\theta\}_\theta)$ with
$f_\theta\colon\mathbb{RP}^1\to[0,1]$ such that
\begin{equation}
\tilde{\rho}_B(\theta)=\int_{x\in\mathbb{RP}^1}xf_\theta(x)\ \mathrm{d}\mu\, .
\end{equation}
(Here $f_{\theta}(x)$ is the probability that Alice gives the outcome
corresponding to the first projector for the measurement
$\{\proj{\theta},\proj{\theta+\pi}\}$ when the hidden variable has
value $x$.) If such a $(\mu, \{f_\theta\}_\theta)$ can be found, this
constitutes a LHS model for the set
$\{\tilde{\rho}_B(\theta)\}_{\theta\in[0,2\pi)}$ and we say that
$\rho_{AB}$ is \emph{RP-unsteerable}.  Conversely, if no such model
exists, we say that $\rho_{AB}$ is \emph{RP-steerable}.

\begin{remark}\label{rmk:convex}
  The property of having a LHS model is convex, i.e., if $\rho_{AB}$
  and $\rho'_{AB}$ have LHS models (for some set of measurements), then so does
  $p\rho_{AB}+(1-p)\rho'_{AB}$ for all $0\leq p\leq 1$ (and the same
  set of measurements).
\end{remark}

\section{Keyring models}

In this section we formalize the class of keyring models.  We begin with some
preliminary definitions.
Drawing from~\cite{Nguyen:2016}, if $\mu$ is a probability distribution
on $\mathbb{RP}^1$, let
$\B(\mu)$ denote the convex set of all operators of the form
\begin{eqnarray}
\int_{x\in\mathbb{RP}^1}xf(x)\ \mathrm{d}\mu,
\end{eqnarray}
where $f:\mathbb{RP}^1\to[0,1]$.  Note that
$\B(\mu)\subset\cR\cA(\mathbb{C}^2)$ with $\Tr(z)\leq 1$ for
$z\in\B(\mu)$ and that an ellipse has a local hidden state model if
and only if it is contained in $\B ( \mu )$ for some probability
distribution $\mu$.

Note that there is a natural identification between $\mathbb{RP}^1$
and the unit circle $S^1\subseteq \mathbb{R}^2$ which is given by
$\frac{1}{2}(\id+r_1\sigma_1+r_3\sigma_3)\leftrightarrow(r_1,r_3)$
with $r_1^2+r_3^2=1$.  We say that a sequence
$s_1,s_2,s_3\in \mathbb{RP}^1$ is a \textit{clockwise} sequence if
the images of $s_1,s_2,s_3$ form a clockwise sequence in $S^1$, and
a \textit{counterclockwise} sequence if the images of $s_1,s_2,s_3$
form a counterclockwise sequence in $S^1$.  (If any of the points
$s_1,s_2,s_3$ are the same, then we will say that the sequence is
both clockwise and counterclockwise.)  We say that a sequence
$t_1,\ldots,t_n\in\mathbb{RP}^1$ is clockwise (resp.\
counterclockwise) if every $3$-term subsequence of
$t_1,t_2,\ldots,t_n,t_1$ is clockwise (resp.\ counterclockwise).

For any $x,y\in\mathbb{RP}^1$, let $[x,y]$ denote the set of all
$z\in\mathbb{RP}^1$ such that $x,y,z$ is a clockwise sequence.
Let $(x,y)=\mathbb{RP}^1\smallsetminus[y,x]$.  Note that, as
implied by the notation, $[x,y]$ is a closed set and $(x,y)$ is open.

\begin{definition}
\label{twostepdef}
A function $f\colon\mathbb{RP}^1\to[0,1]$ is a \emph{two-step
function} if there are (not necessarily distinct) elements
$x,y\in\mathbb{RP}^1$ and $q\in[0,1/2]$ such that
\begin{eqnarray}
f(z)&=&\left\{\begin{array}{ccl}1-q&\textnormal{ if }
&z\in(x,y)\\
q&\textnormal{ if }&z\in(y,x),
\end{array} \right.
\end{eqnarray}
with $q\leq f(x)\leq 1-q$ and $q\leq f(y)\leq 1-q$.
We refer to $q$ as the \emph{bias} of the function and to $x,y$ as the
\emph{endpoints} of the function.  If $q<1/2$, then we refer
specifically to $x$ as the \emph{left endpoint} and to $y$ as the
\emph{right endpoint}.
\end{definition}

A \textit{keyring} model for a set
$\{\sigma^a\}_a\subseteq\mathcal{RP}_\geq(\mathbb{C}^2)$ of
subnormalized states is a local hidden state model $(\mu,\{f_a\})$ in
which the functions are all two-step functions (see
Figure~\ref{keyringfig}).  The next proposition, which is proven in
Appendix~\ref{app:2prop}, shows that any set that has a local hidden
state model also has a keyring model.  Hence when considering our
steering problem it suffices to restrict the set of local hidden state
models to keyring models.

\begin{proposition}
\label{2stepprop}
Let $\mu$ be a probability distribution on $\mathbb{RP}^1$.  Any
element of $z\in\B(\mu)$ can be written
\begin{eqnarray}
z=\int_{x\in\mathbb{RP}^1}xg(x)\ \mathrm{d}\mu ,
\end{eqnarray}
where $g$ is a two-step function.
If $z$ is on the boundary of $\B(\mu)$, then such a function $g$
exists with bias $q=0$.
\end{proposition}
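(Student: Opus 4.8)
The plan is to pass to the coordinate description of $\B(\mu)$ as a convex body in $\mathbb{R}^3$ and analyze it through supporting hyperplanes. Writing each $x\in\mathbb{RP}^1$ as $\tfrac12(\id+\sin\theta\,\sigma_1+\cos\theta\,\sigma_3)$, an element $z=\int x f(x)\,\mathrm{d}\mu$ of $\B(\mu)$ is determined by the three moments $\int f\,\mathrm{d}\mu$, $\int \sin\theta\, f\,\mathrm{d}\mu$, $\int \cos\theta\, f\,\mathrm{d}\mu$, so $\B(\mu)$ is the image of the set of measurable $f:\mathbb{RP}^1\to[0,1]$ under a linear map into $\mathcal{RA}(\mathbb{C}^2)\cong\mathbb{R}^3$. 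This image is convex and, by weak-$*$ compactness of $\{f\}$, compact, so every relative-boundary point admits a supporting functional. For $W=\tfrac12(w_0\id+w_1\sigma_1+w_3\sigma_3)$ the pairing $\langle W,x\rangle=\tfrac12(w_0+w_1\sin\theta+w_3\cos\theta)$ is an affine sinusoid in $\theta$; this is the single fact about the circle that drives the whole argument.

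I would prove the boundary (bias-$0$) statement first. If $z\in\partial\B(\mu)$, pick a nonconstant $W\neq0$ with $\langle W,z\rangle=\max_{\B(\mu)}\langle W,\cdot\rangle$. Because $\langle W,x\rangle$ is an affine sinusoid, the set $\{\langle W,\cdot\rangle>0\}$ is an open arc of $\mathbb{RP}^1$ and $\{\langle W,\cdot\rangle<0\}$ is the complementary open arc, with the at most two zeros $x_0,y_0$ as endpoints. Any $f$ realizing the maximum must equal $1$ on the positive arc and $0$ on the negative arc for $\mu$-almost every $x$: for the bang-bang maximizer $f^\ast=\mathbf{1}_{\{\langle W,\cdot\rangle>0\}}$ one has $\int \langle W,x\rangle(f^\ast-f)\,\mathrm{d}\mu=0$ while the integrand is pointwise nonnegative, forcing equality $\mu$-a.e.\ off $\{x_0,y_0\}$. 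Since the given $f$ representing $z\in\B(\mu)$ is itself such a maximizer, it agrees $\mu$-a.e.\ with $\mathbf 1_{(x_0,y_0)}$; redefining it to equal its original values at $x_0,y_0$ produces a genuine two-step function $g$ of bias $0$ with $\int x g\,\mathrm{d}\mu=z$. The degenerate cases where $\langle W,\cdot\rangle$ has no zero on the support, forcing $f\equiv 0$ or $f\equiv 1$, are two-step functions with coincident endpoints.

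For a general $z\in\B(\mu)$ I would exploit the decomposition of a bias-$q$ two-step function as $g=q+(1-2q)h$, where $h$ is an arc indicator (a bias-$0$ two-step function). This gives $\int x g\,\mathrm{d}\mu=q\,m+(1-2q)\,w$, where $m:=\int x\,\mathrm{d}\mu$ and $w:=\int x h\,\mathrm{d}\mu$ is, by the boundary case, a relative-boundary point of $\B(\mu)$. The centre $c:=\tfrac12 m$ is the image of $f\equiv\tfrac12$, and a short perturbation argument (one may push $f\equiv\tfrac12$ up or down in any admissible direction while staying in $[0,1]$) shows $c$ lies in the relative interior of $\B(\mu)$; equivalently, no nonconstant supporting functional is maximized at $c$, since that would force $f\equiv\tfrac12$ to be bang-bang. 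Now, given any $z$, extend the ray from $c$ through $z$ until it meets the relative boundary at a point $w$; by convexity $z=(1-s)c+s\,w=q\,m+(1-2q)\,w$ with $q=(1-s)/2\in[0,1/2]$. The point $w$ is, by the boundary case, $\int x h\,\mathrm{d}\mu$ for an arc indicator $h$, so $g:=q+(1-2q)h$ is a two-step function of bias $q$ with $\int x g\,\mathrm{d}\mu=z$, completing the proof.

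The main obstacle is the boundary case, and within it the need to control the behaviour at the endpoints $x_0,y_0$: since $\mu$ may have atoms there, one cannot simply discard a measure-zero set, and one must verify that the original representing function already agrees $\mu$-almost everywhere with an arc indicator off the endpoints and can be matched to it there. The sinusoidal form of $\langle W,\cdot\rangle$ — guaranteeing that the sign-change set is exactly an arc with at most two endpoints — is what makes this possible and is the geometric heart of why two-step functions, rather than more complicated bang-bang controls, suffice on the circle.
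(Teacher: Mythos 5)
Your proof is correct and follows essentially the same route as the paper's: a supporting functional at the boundary point, the observation that the sinusoidal pairing $\langle W,x\rangle$ is positive on exactly one open arc (so any maximizing $f$ must agree $\mu$-a.e.\ with the arc indicator away from the at most two zeros, which may then carry arbitrary values in $[0,1]$), and, for interior points, a convex decomposition through the centre $\tfrac12\int x\,\mathrm{d}\mu$ producing a two-step function of positive bias. The paper's Cases 1a--1c are your degenerate/tangent/two-zero subcases, and its interior step $t/2+(1-t)g$ is exactly your $q+(1-2q)h$ with $q=t/2$.
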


Next we will use these techniques to prove a geometric fact
about steerability.  Let us say that the \textit{length} of a
piecewise differentiable curve
$S\colon[0,1]\to\mathcal{RA}(\mathbb{C}^2)$ is its length
under the trace norm:
\begin{eqnarray}
\int_0^1 \left\|\deriv{t}S(t)\right\|_1\mathrm{d}t.
\end{eqnarray}

\begin{proposition}
\label{prop:circum}
Let $\rho_{AB}\in\mathcal{RD}(\mathbb{C}^2\ot\mathbb{C}^2)$ be
a two-qubit state whose steering ellipse has tilt $<1$ and whose
steering ellipse $\{\tilde{\rho}_B(\theta)\}_\theta$ has a local
hidden state model.  Then, the length of
$\{\tilde{\rho}_B(\theta)\}_\theta$ is no more than $2$.
\end{proposition}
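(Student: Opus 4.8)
The plan is to convert the trace-norm length into an ordinary Euclidean arc length in the Bloch plane, and then to recognise that length as the perimeter of a convex curve trapped inside a second convex region whose perimeter can be computed exactly to be $2$.

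First I would exploit the tilt hypothesis to simplify the integrand. Writing operators as $\lambda=\half(n\id+r_1\sigma_1+r_3\sigma_3)$, let $\Pi$ be the linear projection $\lambda\mapsto(r_1,r_3)\in\mathbb{R}^2$. Since the Hilbert--Schmidt inner product is a positive multiple of the Euclidean one in these coordinates, a nonzero tangent vector $v=(v_n,v_{r_1},v_{r_3})$ to the steering ellipse is Euclidean-orthogonal to a normal $N$ of the ellipse's affine plane. Because $\Tilt(N)<1$, Cauchy--Schwarz gives $|v_n|\le\Tilt(N)\sqrt{v_{r_1}^2+v_{r_3}^2}<\sqrt{v_{r_1}^2+v_{r_3}^2}$, so every tangent vector has tilt $>1$. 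By \eqref{eq:1norm} this forces $\|\deriv{\theta}\tilde{\rho}_B(\theta)\|_1=|\deriv{\theta}\Pi(\tilde{\rho}_B(\theta))|$, the ordinary Euclidean speed of the projected curve. Hence the length of the steering ellipse equals the Euclidean length of the closed planar curve $\theta\mapsto\Pi(\tilde{\rho}_B(\theta))$, which bounds a convex region $E\subseteq\mathbb{R}^2$ (the projection of a planar ellipse is an ellipse, possibly degenerate to a segment).

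Next I would bring in the hidden-state hypothesis. Since the ellipse has a local hidden state model it lies in $\B(\mu)$ for some $\mu$, and as $\B(\mu)$ is convex the entire convex hull of the ellipse lies in $\B(\mu)$; applying the linear map $\Pi$ gives $E\subseteq K:=\Pi(\B(\mu))$, a convex subset of $\mathbb{R}^2$. By monotonicity of perimeter for nested planar convex bodies, the length of the steering ellipse, namely $\mathrm{Per}(E)$, is at most $\mathrm{Per}(K)$, so it remains to show $\mathrm{Per}(K)=2$. For this I would use the support function. Under the identification of $\mathbb{RP}^1$ with the unit circle, each $x\in\mathbb{RP}^1$ has $\Pi(x)=u_x\in S^1$, so $K=\{\int_{\mathbb{RP}^1}u_x f(x)\,\mathrm{d}\mu:f\colon\mathbb{RP}^1\to[0,1]\}$. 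For a unit vector $w$ the maximiser is the indicator of the semicircle $\{w\cdot u_x>0\}$ (a bias-$0$ two-step function, in the language of Proposition~\ref{2stepprop}), giving $h_K(w)=\int_{\mathbb{RP}^1}(w\cdot u_x)_+\,\mathrm{d}\mu$. Cauchy's perimeter formula $\mathrm{Per}(K)=\int_0^{2\pi}h_K(w_\psi)\,\mathrm{d}\psi$ together with Fubini then yields $\mathrm{Per}(K)=\int_{\mathbb{RP}^1}\big(\int_0^{2\pi}(w_\psi\cdot u_x)_+\,\mathrm{d}\psi\big)\mathrm{d}\mu=\int_{\mathbb{RP}^1}2\,\mathrm{d}\mu=2$, using $\int_0^{2\pi}(\cos\alpha)_+\,\mathrm{d}\alpha=2$ independently of the direction of $u_x$.

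I expect the conceptual crux to be the first step: it is precisely the tilt $<1$ hypothesis that drives every tangent vector into the tilt $>1$ regime and so collapses the trace norm to the Euclidean speed of the projection. Without this the integrand would not admit a single closed form and the reduction to a planar perimeter would fail. The remaining geometry (perimeter monotonicity and the Cauchy/Fubini computation) is routine, though I would take care with the degenerate case in which the projected ellipse is a segment, where $\mathrm{Per}(E)$ is twice the segment length and the bound $\le 2$ still holds.
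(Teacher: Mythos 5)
Your argument is correct, and it reaches the conclusion by a genuinely different route from the paper's. You share the same first move (the tilt hypothesis forces every tangent vector into the $\Tilt>1$ regime, so by \eqref{eq:1norm} the trace-norm length collapses to the Euclidean length of the projection onto the $(r_1,r_3)$-plane --- this is exactly the content of Lemma~\ref{lem:perptilt}/Corollary~\ref{cor:app1} and the remark following the proposition). Where you diverge is in the quantitative step. The paper first approximates $\mu$ by finitely supported distributions, then proves (Lemma~\ref{lem:leq2}) that the slice $\{M\in\B(\mu)\mid\langle M,H\rangle=\tfrac12\langle\rho,H\rangle\}$ is enclosed by a curve of trace-norm length at most $2$, by explicitly parametrizing the boundary with a one-parameter family of zero-bias two-step functions $\overline{h}_t$ and bounding $\int\|\tfrac{\mathrm{d}}{\mathrm{d}t}G(t)\|_1\,\mathrm{d}t$ via the triangle inequality and the bitonicity of $t\mapsto\overline{h}_t(s_i)$; a limiting argument then transfers this to general $\mu$. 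You instead project the \emph{entire} box $\B(\mu)$ to the plane and compute its perimeter exactly: $h_K(w)=\int(w\cdot u_x)_+\,\mathrm{d}\mu$, so Cauchy's formula plus Tonelli gives $\mathrm{Per}(K)=\int_{\mathbb{RP}^1}2\,\mathrm{d}\mu=2$, and perimeter monotonicity for nested convex sets finishes the job. Your route avoids both the finite-support approximation and the explicit boundary construction, and it yields the sharper statement that the projected box has perimeter exactly $2$ for every $\mu$; the paper's construction, on the other hand, is what later furnishes the explicit keyring model in the converse direction (Case~1 of Theorem~\ref{steerthm2}), so the extra work there is not wasted. Your handling of the degenerate (segment) case is the right caveat; the only points worth making explicit in a full write-up are that the projected ellipse, when nondegenerate, traverses the boundary of its convex hull exactly once (so its length really is $\mathrm{Per}(E)$), and that perimeter monotonicity applies to possibly degenerate convex sets with perimeter read as total variation of the boundary curve.
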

Note that, using~\eqref{eq:1norm}, the length of this curve is the
Euclidean length of the projection of the ellipse onto the $n=0$ plane
in Bloch representation.  It can be calculated using
$$\int_0^{2\pi}\sqrt{\left(\deriv{\theta}r_1(\theta)\right)^2+\left(\deriv{\theta}r_3(\theta)\right)^2}\
\mathrm{d}\theta\, .$$

To prove Proposition~\ref{prop:circum}, we first consider LHS models
in which the distribution $\mu$ is supported on a finite set of points
of $\mathbb{RP}^1$.  Any probability distribution $\mu$ on
$\mathbb{RP}^1$ can be approximated to an arbitrary degree of accuracy
by a probability distribution which is supported on a finite set of
points in the sense that for any $\eps>0$ there exists a finitely
supported distribution $\mu'$ such that for all two-step functions $f$
we have
$$\left\|\int_{x\in\mathbb{RP}^1}xf(x)\ \mathrm{d}\mu-\int_{x\in\mathbb{RP}^1}xf(x)\ \mathrm{d}\mu'\right\|_1\leq\eps\, .$$

The next lemma shows that if $\mu$ is a probability distribution with
finite support then certain slices of $\B(\mu)$ must have
circumference $\leq 2$ under the trace norm.

\begin{lemma}\label{lem:leq2}
  Let $\mu$ be a probability distribution on $\mathbb{RP}^1$ with
  finite support such that $\int_{x\in\mathbb{RP}^1}x\ \mathrm{d}\mu=\rho$,
  and let $H\in\cR\cP_>(\mathbb{C}^2)$.  Then, the set
\begin{eqnarray}
\label{region1}
\left\{M\in\B(\mu)\mid\left<M,H\right>=(1/2)\left<\rho,H\right>\right\}
\end{eqnarray}
is enclosed by a curve of length $\leq 2$.
\end{lemma}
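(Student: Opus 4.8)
The plan is to exploit the finite support of $\mu$ to realize $\B(\mu)$ as a zonotope and then study the central slice cut out by $H$. Write the support of $\mu$ as $x_1,\dots,x_n\in\mathbb{RP}^1$ with weights $p_1,\dots,p_n$, and set $v_i:=p_ix_i$ and $a_i:=\langle x_i,H\rangle$. Since $H$ is positive definite and each $x_i$ is a rank-one projector, $a_i>0$ for every $i$. Because a function $f$ only matters at the support points, $\B(\mu)=\{\sum_i c_iv_i\mid c_i\in[0,1]\}$ is the Minkowski sum of the segments $[0,v_i]$, i.e.\ a zonotope, and it is centrally symmetric about its center $\tfrac12\rho=\sum_i\tfrac12 v_i$. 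The linear functional $M\mapsto\langle M,H\rangle$ takes the value $\tfrac12\langle\rho,H\rangle$ at this center, so the set \eqref{region1} is exactly the slice of $\B(\mu)$ through its center of symmetry. This slice is therefore a (possibly degenerate) centrally symmetric convex polygon $K$, and the goal becomes bounding the trace-norm length of $\partial K$ by $2$.

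Next I would parametrize $\partial K$ using Proposition~\ref{2stepprop}: every boundary point of $\B(\mu)$, and hence every point of $\partial K$, is represented by a two-step function of bias $0$, i.e.\ by an assignment $c_i=f(x_i)\in[0,1]$ that equals $1$ on one arc and $0$ on the complementary arc, with at most the two arc-endpoints taking fractional values. Traversing $\partial K$ continuously, at a generic point only the two coordinates $c_j,c_k$ associated with the two moving endpoints of the arc change, so that $\mathrm{d}M=v_j\,\mathrm{d}c_j+v_k\,\mathrm{d}c_k$. Remaining in the slice forces $\langle M,H\rangle$ to be constant, whence $a_j\,\mathrm{d}c_j+a_k\,\mathrm{d}c_k=0$; in particular $\mathrm{d}c_j$ and $\mathrm{d}c_k$ have opposite signs, so exactly one support point enters the arc while the other leaves. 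Using $\|x_i\|_1=\Tr x_i=1$ and the triangle inequality, $\|\mathrm{d}M\|_1\le p_j|\mathrm{d}c_j|+p_k|\mathrm{d}c_k|$.

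Integrating this estimate once around $\partial K$ and grouping the contributions by support point shows that the length of $\partial K$ is at most $\sum_i p_i\,\mathrm{TV}(c_i)$, where $\mathrm{TV}(c_i)$ denotes the total variation of $c_i$ over one loop. The crux is to show $\mathrm{TV}(c_i)=2$ for each $i$, i.e.\ that over one traversal each support point enters the defining arc exactly once and leaves exactly once; equivalently, that each of the two arc-endpoints winds exactly once around $\mathbb{RP}^1$. Granting this, the length of $\partial K$ is at most $2\sum_i p_i=2$, as required.

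I expect the main obstacle to be establishing that the winding number of each endpoint is exactly one. Monotonicity comes for free from the sign constraint above: on every edge both endpoints rotate in the same sense, and under general position this sense cannot reverse along the connected convex curve $\partial K$, so both endpoints rotate monotonically. To pin the winding down to one rather than a larger integer I would use the central symmetry: the involution $M\mapsto\rho-M$ maps $\partial K$ to itself as the antipodal (half-perimeter) map of the convex polygon $K$, and on the level of two-step functions it sends $f\mapsto 1-f$, interchanging the two endpoints of the arc. Thus after half a traversal the two endpoints have swapped positions, and combined with strict monotonicity this forces each endpoint to sweep out $\mathbb{RP}^1$ exactly once over the full loop. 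Degenerate configurations (fewer than three support points, or $H$ and the support in special position) would be handled by a limiting argument, since the bound is closed under the same finite-support approximation already used to reduce to finitely supported $\mu$.
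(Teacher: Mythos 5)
Your overall strategy coincides with the paper's: both proofs reduce to the finitely supported case, parametrize the boundary of the central slice by zero-bias two-step functions, bound the trace-norm length by $\sum_i \mu(s_i)\,\mathrm{TV}(c_i)$ using $\|s_i\|_1=\Tr(s_i)=1$ and the triangle inequality, and then need each coefficient to have total variation at most $2$ over one traversal. The zonotope picture and the observation that the slice passes through the center of symmetry are consistent with, though not explicit in, the paper's argument.

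The gap is in your justification that $\mathrm{TV}(c_i)=2$, which is the load-bearing step: if an endpoint wound around $\mathbb{RP}^1$ twice, the bound would become $4$ and the proof would fail. You assert that the rotational sense of the arc endpoints ``cannot reverse along the connected convex curve $\partial K$,'' but convexity of $K$ in the slice plane does not by itself control the combinatorics of which support points lie in the arc; you would need to show that the cyclic order in which the edge-pairs $(j,k)$ appear around $\partial K$ is compatible with the cyclic order of the $s_i$ on $\mathbb{RP}^1$, and your central-symmetry argument only pins down the winding number once that monotonicity is in hand. The paper closes exactly this gap constructively rather than topologically: it parametrizes the boundary by cumulative $\langle\cdot,H\rangle$-mass $t$, taking $h_t$ to be the threshold function at level $t$ and $\overline{h}_t=h_{t+\langle\rho,H\rangle/2}-h_t$ (with wrap-around), so that the defining arc is by construction a window of $H$-mass $\langle\rho,H\rangle/2$ sliding monotonically once around $\mathbb{RP}^1$ as $t$ runs over $[0,\langle\rho,H\rangle]$. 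Each $t\mapsto\overline{h}_t(s_i)$ is then manifestly bitonic (each support point enters the window once and leaves once), giving $\mathrm{TV}\leq 2$ immediately; one also checks that the image of this family is all of $\partial K$, since zero-bias functions yield boundary points and the window family exhausts the arcs of the prescribed mass. Substituting this explicit parametrization for your winding-number argument would complete your proof, at which point it is essentially the paper's.
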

This is proven in Appendix~\ref{app:leq2}.

\begin{proof}[Proof of Proposition~\ref{prop:circum}]
  Let $(\mu,\{f_\theta\})$ be a keyring local hidden state model for
  the steering ellipse of $\rho_{AB}$.  Let $H$ be a positive definite
  operator that is normal to the steering ellipse of $\rho_{AB}$ (such
  an operator exists because the tilt of the steering ellipse of
  $\rho_{AB}$ is less than 1 by assumption).  Because it is normal to
  the ellipse, $\langle\tilde{\rho}_B(\theta),H\rangle=u$ (independent
  of $\theta$). Choose a sequence $\mu_1, \mu_2, \ldots$ of
  probability distributions on $\mathbb{RP}^1$ with finite support
  which converges to $\mu$. Then, due to Lemma~\ref{lem:leq2} the sets
\begin{eqnarray}
\left\{M\in\B(\mu_i)\mid\left<M,H\right>=(1/2)\left<\rho_B,H\right>\right\},
\end{eqnarray}
are each enclosed by some curve of circumference $\leq 2$.  They
furthermore converge to the set
\begin{eqnarray}
\left\{M\in\B(\mu)\mid\left<M,H\right>=(1/2)\left<\rho_B,H\right>\right\}.
\end{eqnarray}
Because
$\langle\tilde{\rho}_B,H\rangle=\langle\tilde{\rho}_B(\theta),H\rangle+\langle\tilde{\rho}_B(\theta+\pi),H\rangle=2u$,
this set contains $\tilde{\rho}_B(\theta)$.  The desired result
follows.
\end{proof}

\section{The steering operator}

Proposition~\ref{prop:circum} gives a criterion for steerability that
is sufficient but not necessary.  In order to develop a criterion that
is both necessary and sufficient, we will need to work not with the
circumference of the steering ellipse, but with the following operator
whose trace is equal to the circumference of the steering ellipse:
\begin{eqnarray}
\label{thecentralop}
\int_0^{2\pi}\left|\deriv{\theta}\tilde{\rho}_B(\theta)\right|\mathrm{d}\theta.
\end{eqnarray}
It is easiest to work with cases in which (\ref{thecentralop}) is a
scalar multiple of $\rho_B$.  Our goal in the current section is to
show that for any $2$-qubit state $\rho_{AB}$ whose steering ellipse
has tilt $<1$, there is a $Y\in\cR\cD_>(\mathbb{C}^2)$ such that the
operator~\eqref{thecentralop} for
$\rho'_{AB}=(\id_A\ot Y)\rho_{AB}(\id_A\ot Y)$ is a scalar multiple of
$\rho'_B$.  This will enable the proof of our main result in
Section~\ref{sec:mainresult}.

The first two subsections will contain technical preparations.  First
we prove a concentration result for a particular type of integral.

\subsection{Integrals of the form $\int[F(x)/\sqrt{G(x)}]\ \mathrm{d}x$}

\begin{proposition}
\label{prop:gform}
Let $U\subset\mathbb{R}^n$ contain the origin in its interior, $F\colon U\to\mathcal{RP}_\geq(\mathbb{C}^2)$ be a continuous function
such that $F({\bf 0})\neq{\bf 0}$ and let
$G:U\to\mathbb{R}_{\geq 0}$ be twice differentiable with
$G({\bf x})=0$ if and only if ${\bf x}={\bf 0}$.
Then,
\begin{eqnarray}
\label{sqrtquant}
\lim_{(x_2,\ldots,x_n)\to{\bf 0}}\left<\int_{-a}^a\frac{F({\bf
  x})}{\sqrt{G({\bf x})}}\ \mathrm{d}x_1\right>&=&\left<F({\bf 0})\right>.
\label{qtarget}
\end{eqnarray}
\end{proposition}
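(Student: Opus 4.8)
The plan is to exploit that the singularity of the integrand is concentrated near $x_1={\bf 0}$, where $F$ is close to $F({\bf 0})$. First I would record two structural facts about $G$. Since $G\geq 0$ and $G({\bf x})=0$ only at the interior point ${\bf 0}$, the origin is a strict global minimum, so $\nabla G({\bf 0})={\bf 0}$; Taylor's theorem (using that $G$ is twice differentiable) then gives a constant $C>0$ and a neighborhood of ${\bf 0}$ on which
\begin{equation}
G({\bf x})\ \leq\ C\,|{\bf x}|^2 .
\end{equation}
Second, for any fixed ${\bf x}'=(x_2,\ldots,x_n)\neq{\bf 0}$, the line $x_1\mapsto({\bf x})$ avoids the origin, so $G$ is bounded below by a positive constant on the compact segment $[-a,a]$ and the integral $\int_{-a}^a F({\bf x})/\sqrt{G({\bf x})}\,\mathrm{d}x_1$ is a well-defined finite element of $\mathcal{RP}_\geq(\mathbb{C}^2)$. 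Since $F({\bf 0})\neq{\bf 0}$ is positive semidefinite, $\Tr F({\bf 0})>0$, so the normalization $\left<\,\cdot\,\right>$ is meaningful in the limit.

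\textbf{Splitting the integral.}
Fix $\eps>0$. By continuity of $F$ I can choose $\delta>0$ and a bound on $|{\bf x}'|$ small enough that $\|F({\bf x})-F({\bf 0})\|_1\leq\eps$ whenever $|x_1|\leq\delta$ and $|{\bf x}'|$ is small. I then split the integral as $I({\bf x}')=I_{\mathrm{in}}({\bf x}')+I_{\mathrm{out}}({\bf x}')$, where $I_{\mathrm{in}}$ is over $|x_1|\leq\delta$ and $I_{\mathrm{out}}$ over $\delta\leq|x_1|\leq a$. On the inner piece I write
\begin{equation}
I_{\mathrm{in}}({\bf x}')\ =\ W({\bf x}')\,F({\bf 0})\ +\ E({\bf x}'),\qquad
W({\bf x}'):=\int_{-\delta}^{\delta}\frac{\mathrm{d}x_1}{\sqrt{G({\bf x})}},
\end{equation}
where $W$ is a positive scalar and $\|E({\bf x}')\|_1\leq \eps\,W({\bf x}')$ by the choice of $\delta$.

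\textbf{Logarithmic divergence of $W$ and boundedness of the tail.}
The crux is that $W({\bf x}')\to\infty$ as ${\bf x}'\to{\bf 0}$. Using the quadratic upper bound on $G$,
\begin{equation}
W({\bf x}')\ \geq\ \frac{1}{\sqrt{C}}\int_{-\delta}^{\delta}\frac{\mathrm{d}x_1}{\sqrt{x_1^2+|{\bf x}'|^2}}
\ =\ \frac{2}{\sqrt{C}}\,\mathrm{arcsinh}\!\left(\frac{\delta}{|{\bf x}'|}\right),
\end{equation}
which diverges logarithmically as ${\bf x}'\to{\bf 0}$; this is the slow rate anticipated in the introduction. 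Meanwhile, on the outer region $\delta\leq|x_1|\leq a$ the function $G$ is bounded below by a positive constant uniformly for all small ${\bf x}'$ (by continuity and compactness, since this region excludes the origin), so $\|I_{\mathrm{out}}({\bf x}')\|_1$ is bounded uniformly in ${\bf x}'$ by a constant independent of $\eps$.

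\textbf{Combining and the main obstacle.}
Writing the full integral as $I=W F({\bf 0})+E+I_{\mathrm{out}}$, I divide by its trace. Since $\Tr F({\bf 0})>0$, the denominator $\Tr(I)$ is dominated by $W\,\Tr F({\bf 0})$ as $W\to\infty$, while $\|E\|_1\leq\eps W$ and $\|I_{\mathrm{out}}\|_1=O(1)$. A routine estimate of the form
\begin{equation}
\left\|\left<I\right>-\left<F({\bf 0})\right>\right\|_1
\ \leq\ \frac{\bigl\|\,I-\Tr(I)\left<F({\bf 0})\right>\bigr\|_1}{\Tr(I)}
\end{equation}
then yields a bound that tends to $O(\eps)$ as ${\bf x}'\to{\bf 0}$, because both $E/W$ (norm $\leq\eps$) and $I_{\mathrm{out}}/W\to{\bf 0}$ contribute only lower-order terms. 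Since $\eps>0$ was arbitrary, the limit equals $\left<F({\bf 0})\right>$. The delicate point to handle carefully is the order of the two limits: $\delta$ (hence $\eps$) must be fixed first so that the continuity bound controls $E$, and only afterwards is ${\bf x}'\to{\bf 0}$ taken so that $W\to\infty$ overwhelms the bounded tail $I_{\mathrm{out}}$ while keeping the relative error at $O(\eps)$. Making this nested limit rigorous, together with establishing the uniform lower bound on $G$ on the outer region, is the main obstacle.
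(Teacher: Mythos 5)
Your proposal is correct and follows essentially the same route as the paper's proof: both use the twice-differentiability of $G$ at its minimum to get the quadratic bound $G({\bf x})\leq C|{\bf x}|^2$, deduce the logarithmic divergence of $\int_{-\delta}^{\delta}\mathrm{d}x_1/\sqrt{G}$ as $(x_2,\ldots,x_n)\to{\bf 0}$, observe that the contribution from $|x_1|\geq\delta$ stays bounded because $G$ vanishes only at the origin, and conclude via continuity of $F$ that the normalized integral is pinned near $\left<F({\bf 0})\right>$. Your explicit decomposition $I=W\,F({\bf 0})+E+I_{\mathrm{out}}$ is just a more quantitative phrasing of the paper's "convex hull of $F((-\delta,\delta))$" step, with the same nesting of limits.
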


\begin{proof}
  Since $G$ is twice differentiable and ${\bf x}={\bf 0}$ is a minimum
  of $G$, we have $\left|G({\bf x})\right|\leq C|{\bf x}|^2$ for some
  constant $C>0$.  Thus,
\begin{eqnarray}
\int_{-a}^a\frac{\mathrm{d}x_1}{G({\bf
    x})}&\geq&\frac{1}{C}\int_{-a}^a\frac{\mathrm{d}x_1}{\sqrt{x_1^2+y^2}}\\
&=&\frac{1}{C}\log\left(\frac{\sqrt{a^2+y^2}+a}{\sqrt{a^2+y^2}-a}\right)\\
&=&\frac{1}{C}\left(\log(1/y^2)+2\log(\sqrt{a^2+y^2}+a)\right)\, ,
\end{eqnarray}
where $y^2=\sum_{i=2}^nx_i^2$.  Since $y\to 0$ as
$(x_2,\ldots,x_n)\to(0,\ldots,0)$, this tends to $\infty$.
On the other hand, for any $\delta \in (0, a)$, 
\begin{eqnarray}
\lim_{(x_2,\ldots,x_n)\to{\bf 0}}\int_{[-a,
  a]\smallsetminus(-\delta,\delta)}\frac{\mathrm{d}x_1}{\sqrt{G({\bf
  x})}}=\int_{[-a,a]\smallsetminus(-\delta,\delta)}\frac{\mathrm{d}x_1}{\sqrt{G(x_1,0,\ldots,0)}}<\infty,
\end{eqnarray}
since we assumed that $G({\bf x})$ has only one zero.  Thus
as $(x_2,\ldots,x_n)\to{\bf 0}$, the integral of
$F(\mathbf{x})/\sqrt{G({\bf x})}$ on
$(-\delta,\delta)$ dominates the integral of the same quantity
on $[-a,a]\smallsetminus(-\delta,\delta)$.  The quantity on the left
side of Equation~\eqref{sqrtquant} is therefore in the convex
hull of $F((-\delta,\delta))$.  Since this holds true
for any $\delta>0$, Equation~\eqref{sqrtquant} follows.
\end{proof}

\subsection{Formulas for the absolute value of a $2 \times 2$ matrix}

Throughout this section, $X$ and $Y$ denote $2\times 2$ real symmetric
matrices.  For any such matrix
$Y=\left[\begin{array}{cc} d & e \\ e & f \end{array}\right]$, let
$\widehat{Y}=\left[\begin{array}{cc} f & -e \\ -e &
    d \end{array}\right]$
denote the adjugate matrix.  (The adjugate matrix has the same
eigenspaces as $Y$, with the two eigenvalues interchanged.)  Note that
$Y\widehat{Y}=\det(Y)\id$.

\begin{definition}
If $X,Y\in\mathcal{RA}(\mathbb{C}^2)$ and $Y$ is invertible, let
\begin{eqnarray}
\left|X\right|_Y=Y^{-1}\left|YXY\right|Y^{-1}.
\end{eqnarray}
\end{definition}

Note that if $X$ is positive semidefinite, then its trace-norm and absolute value are easily computed:
$\|X\|_1=\Tr(X)$, and $|X|=X$.  The next propositions compute these values
in the case where $X$ is neither positive semidefinite nor negative semidefinite.

\begin{proposition}
If $X$ is such that $X\ngeq 0$ and $X\nleq 0$, then
\begin{eqnarray}
\label{traceeq1}
\left\|X\right\|_1&=&\sqrt{\Tr(X^2-X\widehat{X})} \\
\label{abseq}
\left|X\right|&=&\frac{X^2-X\widehat{X}}{\left\|X\right\|_1}.
\end{eqnarray}
\label{prop:sf}
\end{proposition}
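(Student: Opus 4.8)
The goal is to prove Proposition~\ref{prop:sf}: for a $2\times 2$ real symmetric matrix $X$ that is neither positive nor negative semidefinite, we have $\|X\|_1 = \sqrt{\Tr(X^2 - X\widehat X)}$ and $|X| = (X^2 - X\widehat X)/\|X\|_1$. Let me think about what these formulas are really saying.

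For a $2\times 2$ symmetric $X$ with eigenvalues $\lambda_1, \lambda_2$, the hypothesis $X \ngeq 0$ and $X \nleq 0$ means the eigenvalues have strictly opposite signs, say $\lambda_1 > 0 > \lambda_2$. Then $|X| = \sqrt{X^2}$ has eigenvalues $|\lambda_1|, |\lambda_2| = \lambda_1, -\lambda_2$, so $\|X\|_1 = \lambda_1 - \lambda_2 = \lambda_1 + |\lambda_2|$. The adjugate $\widehat X$ swaps the eigenvalues (keeping eigenvectors fixed), so $\widehat X$ has eigenvalues $\lambda_2, \lambda_1$ on the same eigenvectors.
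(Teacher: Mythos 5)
Your plan is sound and is essentially the same ``direct computation'' that the paper itself invokes (its entire proof of Proposition~\ref{prop:sf} is the phrase ``Direct computation''), just organized in the eigenbasis of $X$ rather than in matrix entries. The issue is that your proposal stops at the setup: you correctly observe that the hypotheses $X\ngeq 0$, $X\nleq 0$ force eigenvalues $\lambda_1>0>\lambda_2$, that $\|X\|_1=\lambda_1-\lambda_2$, and that $\widehat{X}$ has the same eigenvectors as $X$ with the eigenvalues interchanged, but you never actually evaluate either side of the two claimed identities, which is the content of the proposition.

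To close the gap from where you left off: since $X\widehat{X}=\det(X)\,\id=\lambda_1\lambda_2\,\id$, the operator $X^2-X\widehat{X}$ is diagonal in the eigenbasis of $X$ with entries $\lambda_1^2-\lambda_1\lambda_2$ and $\lambda_2^2-\lambda_1\lambda_2$, so $\Tr(X^2-X\widehat{X})=\lambda_1^2+\lambda_2^2-2\lambda_1\lambda_2=(\lambda_1-\lambda_2)^2$, and its positive square root is $\lambda_1-\lambda_2=|\lambda_1|+|\lambda_2|=\|X\|_1$; note that the sign hypothesis is exactly what makes this equal the trace norm (for a definite $X$ one would instead get $|\lambda_1-\lambda_2|\neq|\lambda_1|+|\lambda_2|$, and indeed \eqref{traceeq1} fails there). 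For \eqref{abseq}, dividing the same diagonal operator by $\|X\|_1=\lambda_1-\lambda_2$ gives eigenvalues $\lambda_1(\lambda_1-\lambda_2)/(\lambda_1-\lambda_2)=|\lambda_1|$ and $\lambda_2(\lambda_2-\lambda_1)/(\lambda_1-\lambda_2)=-\lambda_2=|\lambda_2|$ on the eigenvectors of $X$, which is precisely $|X|$. These remaining steps are routine given your setup, but they must be carried out for the argument to be a proof.
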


\begin{proof}
Direct computation.
\end{proof}

\begin{proposition}
\label{prop:yxcomp}
If $X$ is such that $X\ngeq 0$ and $X\nleq 0$, and $Y$ is invertible, then
\begin{eqnarray}
\left|X\right|_Y&=&\frac{XY^2X-(\det{X})\widehat{Y}^2}{\left\|YXY\right\|_1}.
\end{eqnarray}
\end{proposition}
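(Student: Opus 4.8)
The plan is to reduce the whole computation to Proposition~\ref{prop:sf} applied to the matrix $Z=YXY$, and then to translate the result back through the definition $|X|_Y=Y^{-1}|YXY|Y^{-1}$. First I would verify that the hypotheses of Proposition~\ref{prop:sf} transfer from $X$ to $Z$. Since $X$ is a $2\times2$ real symmetric matrix with $X\ngeq 0$ and $X\nleq 0$, it has one strictly positive and one strictly negative eigenvalue; in particular $\det X<0$, so $X$ is invertible. Because $Y=Y^{T}$ is invertible, the map $X\mapsto YXY$ is a congruence, so by Sylvester's law of inertia $Z=YXY$ has the same signature $(1,1)$ as $X$. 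Hence $Z\ngeq 0$ and $Z\nleq 0$, and $Z$ is invertible. This is the one place where the matrix structure (rather than pure algebra) enters, and I expect it to be the only genuine conceptual step.

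With Proposition~\ref{prop:sf} available for $Z$, I would write $|X|_Y=Y^{-1}|YXY|Y^{-1}=\frac{Y^{-1}(YXY)^2Y^{-1}-Y^{-1}(YXY)\widehat{(YXY)}\,Y^{-1}}{\|YXY\|_1}$ and simplify the two numerator terms separately. The first term is immediate: $Y^{-1}(YXY)^2Y^{-1}=(Y^{-1}Y)X(YY)X(YY^{-1})=XY^2X$, which already matches the first term of the claimed expression.

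For the second term I would exploit invertibility to avoid manipulating the adjugate of a product directly. Since $YXY$ is invertible, $\widehat{(YXY)}=\det(YXY)(YXY)^{-1}=\det(Y)^2(\det X)\,Y^{-1}X^{-1}Y^{-1}$. Substituting and collapsing adjacent inverses gives $Y^{-1}(YXY)\widehat{(YXY)}Y^{-1}=\det(Y)^2(\det X)\,(XY)(Y^{-1}X^{-1}Y^{-1})Y^{-1}=(\det X)\det(Y)^2Y^{-2}$. Finally I would recognize $\det(Y)^2Y^{-2}=(\det(Y)Y^{-1})^2=\widehat{Y}^2$, using the identity $\widehat{Y}=\det(Y)Y^{-1}$ for invertible $Y$ recorded just before the definition of $|X|_Y$. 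Combining the two simplified terms yields $|X|_Y=\frac{XY^2X-(\det X)\widehat{Y}^2}{\|YXY\|_1}$, as claimed.

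Each individual manipulation is a routine cancellation once invertibility of both $X$ and $YXY$ is in hand, so there is no serious obstacle. The only points requiring care are confirming that $YXY$ inherits indefiniteness, so that Proposition~\ref{prop:sf} legitimately applies, and bookkeeping the determinant factors $\det(Y)^2(\det X)$ when rewriting $\widehat{(YXY)}$.
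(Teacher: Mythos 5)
Your proof is correct and follows essentially the same route as the paper's: both apply Proposition~\ref{prop:sf} to $Z=YXY$ and then simplify $Y^{-1}|YXY|Y^{-1}$ using adjugate identities. The only differences are cosmetic — the paper rewrites $\widehat{YXY}$ as $\widehat{Y}\widehat{X}\widehat{Y}$ via multiplicativity of the adjugate rather than as $\det(YXY)(YXY)^{-1}$, and you add the (worthwhile, and omitted in the paper) check via Sylvester's law of inertia that $YXY$ inherits indefiniteness so that Proposition~\ref{prop:sf} applies.
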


\begin{proof}
See Appendix~\ref{app:yxcomp}.
\end{proof}

Note that $\left\|YXY\right\|_1^2$ is a polynomial in the entries
of $X$ and $Y$ (via (\ref{traceeq1})) and is therefore infinitely
differentiable as a function of $X$ and $Y$.

\subsection{The steering operator of a two-qubit state}\label{app:C}
We now apply the results from the previous subsections.  Suppose, that
that $\rho_{AB}$ is a two-qubit state and that its steering ellipse
$\{\tilde{\rho}_B(\theta)\mid\theta\in\mathbb{R}\}$ has tilt less than
$1$.  Define function
$X\colon\mathbb{R}\to\mathcal{RA}(\mathbb{C}^2)$ so that
\begin{eqnarray}
\label{abcdef}
X(\theta)=\deriv{\theta}\tilde{\rho}_B(\theta)=\Tr_A((D(\theta)\ot\id)\rho_{AB})\, ,
\end{eqnarray}
where
$D(\theta)=\frac{1}{2}(-\sin\theta\proj{0}+\cos\theta(\ketbra{0}{1}+\ketbra{1}{0})+\sin\theta\proj{1})$.

Note that because the steering ellipse
$\{\tilde{\rho}_B(\theta)\}$ has tilt $<1$, for every $\theta$
the operator $X(\theta)$ is neither positive semidefinite nor
negative semidefinite (cf.\ Corollary~\ref{cor:app1}).

Let $P\in\mathbb{RP}^1$.  Let
$Y\colon\mathbb{R}^2\to\mathcal{RP}_{\geq}(\mathbb{C}^2)$ be given by
\begin{eqnarray}
Y(r_1,r_3)&=&P+r_1\sigma_1+r_3\sigma_3.
\end{eqnarray}
The function $\theta\mapsto\Tr(PX(\theta))$ varies sinusoidally
and has exactly two zeros in $[0,2\pi)$. Without loss of generality,
we will assume that the zeros are $\theta=0$ and $\theta=\pi$.  We
wish to compute
\begin{eqnarray}
&& \lim_{(r_1,r_3) \to (0,0)} \left< \int_{-\pi/2}^{\pi/2} \left| X \right|_{Y} \mathrm{d} \theta \right> \\
\label{ultexp}
& = & \lim_{(r_1,r_3) \to (0,0)} \left< \int_{-\pi/2}^{\pi/2} \frac{
      XY^2 X - (\det X ) \widehat{Y}^2 }{\sqrt{ \left\| Y X Y
      \right\|_1^2}}\ \mathrm{d} \theta \right>.
\end{eqnarray}
The function $\left\| P X(\theta) P \right\|^2_1 
= (\Tr ( P X(\theta) ))^2$  on the interval $[-\pi/2 , \pi/2]$ has a zero only at $\theta = 0$.  By Proposition~\ref{prop:gform}
(with $G(\theta, r_1, r_3 ) = \left\| Y X Y \right\|_1^2$
and $F( \theta, r_1, r_3 )$ equal to the numerator of the integrand in (\ref{ultexp})), we obtain the following:
\begin{eqnarray}
\lim_{(r_1,r_3) \to (0,0)} \left< \int_{-\pi/2}^{\pi/2} \left| X  \right|_{Y} \mathrm{d} \theta \right> & = & \left< X ( 0) P^2 X ( 0 ) - \det (X ( 0 ) )
\widehat{P}^2 \right> \\
& = & \left<-2 \det (X ( 0 ) ) \widehat{P}^2 \right> \\
& = & \widehat{P}.
\end{eqnarray}
Exploiting symmetry, the same equality holds when we replace
the upper and lower integral limits with $-\pi/2$ and $3\pi/2$ (or equivalently, with $0$ and $2 \pi$).  We therefore have the following.
\begin{theorem}
\label{theorem:boundary}
Let $\rho_{AB}$ be a two-qubit state whose steering ellipse $\{ \tilde{\rho}_B ( \theta) \mid \theta \in 
\mathbb{R} \}$ has tilt less than $1$.  Then, for any $P \in \mathbb{RP}^1$,
\begin{eqnarray}
\lim_{Y \to P} \left< \int_{0}^{2 \pi} \left| \deriv{\theta} \tilde{\rho}_B ( \theta )  \right|_{Y} \mathrm{d} \theta \right>  =  \widehat{P}=\id-P,
\end{eqnarray}
where the limit is taken over all positive definite density operators
$Y$.
\end{theorem}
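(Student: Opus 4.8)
The plan is to reduce the claim to the concentration estimate of Proposition~\ref{prop:gform}. Write $X(\theta)=\deriv{\theta}\tilde{\rho}_B(\theta)$. Since the steering ellipse has tilt $<1$, every $X(\theta)$ is indefinite (neither positive nor negative semidefinite), so Proposition~\ref{prop:yxcomp} applies and gives
\[
\left|X(\theta)\right|_Y=\frac{X(\theta)Y^2X(\theta)-(\det X(\theta))\widehat{Y}^2}{\left\|YX(\theta)Y\right\|_1}.
\]
Because $\left\|YXY\right\|_1^2$ is a polynomial in the entries of $X$ and $Y$, the denominator is smooth and the numerator is a continuous, real-symmetric-matrix-valued function. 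I would parametrize a neighborhood of $P$ by $Y=P+r_1\sigma_1+r_3\sigma_3$; since $\left|X\right|_Y$ is invariant under positive rescaling of $Y$, and this two-parameter family exhausts the trace-one real symmetric matrices near $P$, it suffices to evaluate $\lim_{(r_1,r_3)\to(0,0)}\left<\int_0^{2\pi}\left|X\right|_Y\,\mathrm{d}\theta\right>$.

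The structural fact I would exploit is that $\theta\mapsto\Tr(PX(\theta))$ is sinusoidal with exactly two transverse zeros in $[0,2\pi)$, which without loss of generality I place at $\theta=0$ and $\theta=\pi$. At the limiting value $Y=P$ the denominator reduces to $\left\|PX(\theta)P\right\|_1=\left|\Tr(PX(\theta))\right|$, using that $PX(\theta)P=\Tr(PX(\theta))\,P$ for the rank-one projector $P$; hence it vanishes exactly at the two zeros. I would split $[0,2\pi)$ into two arcs, each containing a single zero, and on each arc apply Proposition~\ref{prop:gform} with $x_1=\theta$, remaining variables $(r_1,r_3)$, $G=\left\|YXY\right\|_1^2$, and $F$ equal to the numerator above. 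The proposition then forces the normalized integral over each arc to concentrate at its zero and to converge to $\left<F\right>$ evaluated there with $Y=P$.

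It remains to compute $F$ at a zero $\theta_0$ of $\Tr(PX)$ with $Y=P$. Here I would use $PX(\theta_0)P=0$: in the eigenbasis of $P$ this says the $(1,1)$-entry of $X(\theta_0)$ vanishes, and a one-line computation then gives $X(\theta_0)PX(\theta_0)=-\det(X(\theta_0))\widehat{P}$ together with $\widehat{P}^2=\widehat{P}$, so that
\[
X(\theta_0)P^2X(\theta_0)-(\det X(\theta_0))\widehat{P}^2=-2\det(X(\theta_0))\widehat{P}.
\]
Indefiniteness of $X(\theta_0)$ gives $\det X(\theta_0)<0$, so this is a positive multiple of $\widehat{P}$; normalizing yields $\left<\widehat{P}\right>=\widehat{P}$ since $\Tr\widehat{P}=1$. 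Both arcs produce the same limiting direction $\widehat{P}$, so the full normalized integral converges to $\widehat{P}=\id-P$.

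The main obstacle I anticipate is the concentration step: the denominator blows up only logarithmically near each zero (as the estimate $\int \mathrm{d}x_1/G\sim\log(1/y^2)$ in the proof of Proposition~\ref{prop:gform} shows), so one must argue carefully that the normalized integral nonetheless localizes at the zeros and that the limit may be moved inside the integral. This is precisely what Proposition~\ref{prop:gform} supplies, so the real work is in verifying its hypotheses along the present approach\,---\,in particular that, as $(r_1,r_3)\to(0,0)$ through positive definite $Y$, the denominator $G=\left\|YXY\right\|_1^2$ has an isolated zero on each arc and stays bounded below away from it. By comparison the algebraic identity $X(\theta_0)PX(\theta_0)=-\det(X(\theta_0))\widehat{P}$ is elementary, but it is exactly what makes the limit land on the orthogonal projector rather than on some other operator.
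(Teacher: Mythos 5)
Your proposal is correct and follows essentially the same route as the paper: the same reduction via Proposition~\ref{prop:yxcomp} to the quotient form, the same parametrization $Y=P+r_1\sigma_1+r_3\sigma_3$, localization at the two zeros of $\Tr(PX(\theta))$ via Proposition~\ref{prop:gform}, and the same evaluation of the numerator as $-2\det(X(\theta_0))\widehat{P}$. The only cosmetic difference is that the paper integrates over $[-\pi/2,\pi/2]$ and doubles by symmetry rather than splitting $[0,2\pi)$ into two arcs.
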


As a consequence of Theorem~\ref{theorem:boundary}, the function
$\mathcal{RD}_> ( \mathbb{C}^2 ) \to \mathcal{RD} ( \mathbb{C}^2 )$
given by
\begin{eqnarray}
\label{normmap}
Y \mapsto \left< \int_0^{2 \pi} \left| \deriv{\theta} \tilde{\rho}_B ( \theta ) \right|_{Y} \mathrm{d} \theta \right>
\end{eqnarray}
extends continuously to a map $\mathcal{RD}(\mathbb{C}^2)\to
\mathcal{RD}(\mathbb{C}^2)$ which has the effect of mapping each
element of $\mathbb{RP}^1$ to its orthogonal complement [see,
  for example,
  Theorem~D on Page~78 of~\cite{Simmons}.].  By
Lemma~\ref{toplemma} in the appendix, the function given
by~\eqref{normmap} is onto.  In particular, its image contains
$\rho_B$.  We therefore have the following.

\begin{lemma}
\label{ontolemma}
Let $\rho_{AB}\in\cR\cD(\mathbb{C}^2\ot\mathbb{C}^2)$ be a
two-qubit state whose steering ellipse has tilt $<1$.  Then, there
exists $Y\in\mathcal{RD}_>(\mathbb{C}^2)$ such that
\begin{eqnarray}
\int_0^{2 \pi}\left|\deriv{\theta}\tilde{\rho}_B(\theta)\right|_{Y}\mathrm{d}\theta
\end{eqnarray}
is a scalar multiple of $\rho_B$.
\end{lemma}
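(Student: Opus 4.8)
The plan is to reduce the lemma to a surjectivity statement for the normalized map
\[
\Psi\colon\mathcal{RD}_>(\mathbb{C}^2)\to\mathcal{RD}(\mathbb{C}^2),\qquad
\Psi(Y):=\left<\int_0^{2\pi}\left|\deriv{\theta}\tilde{\rho}_B(\theta)\right|_Y\mathrm{d}\theta\right>,
\]
observing that $\Psi(Y)=\rho_B$ is exactly the assertion that the unnormalized integral is a scalar multiple of $\rho_B$ (since $\Tr\rho_B=1$, we have $\left<\rho_B\right>=\rho_B$). First I would check that $\Psi$ is well defined and continuous on the open set $\mathcal{RD}_>(\mathbb{C}^2)$. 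By Proposition~\ref{prop:yxcomp} the integrand equals $(XY^2X-(\det X)\widehat{Y}^2)/\|YXY\|_1$, whose denominator is smooth (its square is a polynomial in the entries of $X,Y$) and, for invertible $Y$, vanishes only where $X(\theta)=0$; since the tilt of the ellipse is $<1$ each $X(\theta)$ is indefinite and in particular nonzero, so the integrand is bounded and continuous in $\theta$ and depends continuously on $Y$.

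The central input is Theorem~\ref{theorem:boundary}, which I would use to extend $\Psi$ continuously to the boundary. Identifying $\mathcal{RD}(\mathbb{C}^2)$ with a closed $2$-disc through the Bloch coordinates $\tfrac12(\id+r_1\sigma_1+r_3\sigma_3)\leftrightarrow(r_1,r_3)$ and its boundary with $\mathbb{RP}^1\cong S^1$, the theorem gives $\lim_{Y\to P}\Psi(Y)=\widehat{P}=\id-P$ for every $P\in\mathbb{RP}^1$, so $\Psi$ extends to a continuous self-map $\overline{\Psi}$ of the disc. On the boundary circle this extension is $P\mapsto\id-P$, i.e.\ $(r_1,r_3)\mapsto(-r_1,-r_3)$, the antipodal map (rotation by $\pi$); in particular it sends $\partial$ into $\partial=\mathbb{RP}^1$ and is a degree-one homeomorphism of $S^1$.

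I would then invoke the topological fact recorded in Lemma~\ref{toplemma}: a continuous self-map of the disc whose boundary restriction has nonzero degree is onto. Hence $\overline{\Psi}$ is surjective, so some $Y_0$ in the closed disc satisfies $\overline{\Psi}(Y_0)=\rho_B$. It remains to see that $Y_0$ lies in the interior, i.e.\ $Y_0\in\mathcal{RD}_>(\mathbb{C}^2)$: since $\overline{\Psi}$ maps $\partial$ into $\partial=\mathbb{RP}^1$, any preimage of an interior point is interior, and $\rho_B$ is interior because if $\rho_B$ were pure then $\rho_{AB}=\sigma_A\ot\proj{\psi}$ would force a degenerate (tilt $\infty$) steering ellipse, contradicting tilt $<1$. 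Thus $Y_0\in\mathcal{RD}_>(\mathbb{C}^2)$ and $\Psi(Y_0)=\rho_B$, which is the claim.

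The only genuinely hard step is already discharged upstream, inside Theorem~\ref{theorem:boundary}: showing that the normalization of the integral converges to $\widehat{P}$ as $Y\to P$. As the introduction flags, the subtlety is that the relevant scalar integral $\int \|YXY\|_1^{-1}\,\mathrm{d}\theta$ diverges only logarithmically, so the mass concentrating at the single zero of $\|YXY\|_1$ near $\theta=0$ dominates exactly in the limit; this is precisely what Proposition~\ref{prop:gform} supplies. Granting that theorem, the present lemma is a short topological corollary, and the only additional care needed is the interior/boundary bookkeeping above that guarantees the produced $Y$ is positive definite.
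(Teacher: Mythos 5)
Your proposal is correct and follows essentially the same route as the paper: extend the normalized map continuously to the closed Bloch disc via Theorem~\ref{theorem:boundary}, note that the boundary restriction is $P\mapsto\id-P$, and conclude surjectivity from Lemma~\ref{toplemma}. Your additional check that the preimage of $\rho_B$ must be interior (hence positive definite), using the fact that tilt $<1$ forces $\rho_B$ to be mixed, is a detail the paper leaves implicit but is correct and worth making explicit.
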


Note that if we use
$\rho'_{AB}=(\id_A\ot Y)\rho_{AB}(\id_A\ot Y)$ in
Lemma~\ref{ontolemma} then we have that
\begin{eqnarray}
\int_{0}^{2\pi}\left|\deriv{\theta}\tilde{\rho}'_B(\theta)\right|\mathrm{d}\theta
\end{eqnarray}
is a scalar multiple of $\rho'_B$, which was our original goal.

\section{A criterion for RP-steerability}

\label{sec:mainresult}

Now we are ready to prove a criterion for RP-steerability that is both
necessary and sufficient.  The next theorem and corollary contain our
main result.

\begin{theorem}
\label{steerthm2}
Let $\rho_{AB}\in\cR\cD(\mathbb{C}^2\ot\mathbb{C}^2)$ be a two-qubit state whose steering ellipse
has tilt $<1$.  Then, $\rho_{AB}$ is RP-unsteerable if and only if
there exists $Y\in\cR\cP_>(\mathbb{C}^2)$ such that
\begin{eqnarray}
\label{steerbound2}
Y\rho_BY-\int_0^{\pi} \left|Y\deriv{\theta} \left( \tilde{\rho}_B \left(
                \theta \right) \right) Y\right|\ \mathrm{d} \theta\geq 0\, .
\end{eqnarray}
\end{theorem}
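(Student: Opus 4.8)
The plan is to reduce the operator inequality~\eqref{steerbound2} to the two criteria already isolated in the introduction by passing to the filtered state $\rho'_{AB}=(\id_A\ot Y)\rho_{AB}(\id_A\ot Y)$, and then to play the necessary criterion (Proposition~\ref{prop:circum}) against a sufficient one. First I would record the identities the filter produces. Since $Y$ acts only on $B$, one has $\tilde{\rho}'_B(\theta)=Y\tilde{\rho}_B(\theta)Y$ and $\rho'_B=Y\rho_BY$, so $\deriv{\theta}\tilde{\rho}'_B(\theta)=Y(\deriv{\theta}\tilde{\rho}_B(\theta))Y$. Using the antipodal symmetry $\tilde{\rho}_B(\theta+\pi)=\rho_B-\tilde{\rho}_B(\theta)$, which gives $\deriv{\theta}\tilde{\rho}_B(\theta+\pi)=-\deriv{\theta}\tilde{\rho}_B(\theta)$ and hence a $\pi$-periodic integrand, I obtain
\begin{eqnarray}
\int_0^{2\pi}\left|\deriv{\theta}\tilde{\rho}'_B(\theta)\right|\mathrm{d}\theta=2\int_0^{\pi}\left|Y\deriv{\theta}\tilde{\rho}_B(\theta)Y\right|\mathrm{d}\theta .
\end{eqnarray}
Thus~\eqref{steerbound2} for $Y$ is exactly the statement $\int_0^{2\pi}|\deriv{\theta}\tilde{\rho}'_B(\theta)|\,\mathrm{d}\theta\leq 2\rho'_B$, i.e.\ condition~\eqref{circum2} applied to $\rho'_{AB}$. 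Two further facts about filtering I would establish up front: steerability is invariant under the invertible map $\id_A\ot Y$ (Lemma~\ref{lem:M}), and the tilt-${<}1$ hypothesis is preserved, since a normal $H$ to the steering plane of $\rho_{AB}$ is carried to the normal $Y^{-1}HY^{-1}$ for $\rho'_{AB}$, and congruence by the invertible $Y^{-1}$ keeps $H$ positive definite (tilt ${<}1$ being equivalent to the normal being, up to sign, positive definite).

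For the ``only if'' direction, assume $\rho_{AB}$ is RP-unsteerable. By Lemma~\ref{ontolemma} there is a $Y$ with $\int_0^{2\pi}|\deriv{\theta}\tilde{\rho}_B(\theta)|_Y\,\mathrm{d}\theta$ a scalar multiple of $\rho_B$; conjugating by $Y$ this reads $\int_0^{2\pi}|\deriv{\theta}\tilde{\rho}'_B(\theta)|\,\mathrm{d}\theta=c\,\rho'_B$ for some $c>0$. Rescaling $Y$ by a positive constant (which changes neither $c$ nor the sign of~\eqref{steerbound2}) I may assume $\rho'_{AB}$ is a density operator. By the filtering facts above, $\rho'_{AB}$ is then unsteerable and has tilt ${<}1$, and the length of its steering ellipse is $\Tr(c\rho'_B)=c$. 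Proposition~\ref{prop:circum} forces this length to be at most $2$, so $c\leq 2$ and
\begin{eqnarray}
Y\rho_BY-\int_0^{\pi}\left|Y\deriv{\theta}\tilde{\rho}_B(\theta)Y\right|\mathrm{d}\theta=\left(1-\tfrac{c}{2}\right)\rho'_B\geq 0 ,
\end{eqnarray}
which is precisely~\eqref{steerbound2}.

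For the ``if'' direction, suppose some $Y\in\mathcal{RP}_>(\mathbb{C}^2)$ satisfies~\eqref{steerbound2}. By the first paragraph this is condition~\eqref{circum2} for $\rho'_{AB}$, and it remains to show that~\eqref{circum2} is \emph{sufficient} for the existence of a local hidden state model; unsteerability of $\rho_{AB}$ then follows by filtering invariance. This sufficiency is the genuinely new construction required, and is where I expect the real work to lie. Because the ellipse has tilt ${<}1$, each velocity $\deriv{\theta}\tilde{\rho}'_B(\theta)$ is indefinite, with one-dimensional positive and negative eigenspaces $P_+(\theta),P_-(\theta)$, and $|\deriv{\theta}\tilde{\rho}'_B(\theta)|=\lambda_+(\theta)P_+(\theta)+\lambda_-(\theta)P_-(\theta)$. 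The natural move is to build a keyring model (Proposition~\ref{2stepprop}) whose two-step functions $f_\theta$ have boundary bias and endpoints tracking $P_-(\theta),P_+(\theta)$, so that differentiating $\int x f_\theta\,\mathrm{d}\mu$ reproduces $\lambda_+P_+-\lambda_-P_-=\deriv{\theta}\tilde{\rho}'_B(\theta)$, with $\mu$ a measure on $\mathbb{RP}^1$ of total mass one and first moment $\rho'_B$. The operator bound $\int_0^{2\pi}|\deriv{\theta}\tilde{\rho}'_B|\,\mathrm{d}\theta\leq 2\rho'_B$ is exactly the statement that the density this forces on $\mu$ along the two endpoint arcs can be supplied without exceeding a probability measure with the correct moment. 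The main obstacle is therefore not the logical assembly---the dichotomy $c\leq 2$ versus $c>2$ cleanly separates the necessary and sufficient criteria---but verifying that~\eqref{circum2} is constructively sufficient, i.e.\ that the endpoint/measure data just described is always realizable; the non-antipodal freedom in the endpoints of the two-step functions is precisely what makes this achievable for tilted ellipses, which is why keyring models rather than the half-sphere models of earlier work are needed here.
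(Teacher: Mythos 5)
Your ``only if'' direction is essentially the paper's own argument (its Case~3): Lemma~\ref{ontolemma} supplies a $Y$ making $\int_0^{2\pi}\left|\deriv{\theta}\tilde{\rho}_B(\theta)\right|_Y\mathrm{d}\theta$ proportional to $\rho_B$, and Proposition~\ref{prop:circum} applied to the filtered state bounds the proportionality constant by $2$, which is exactly \eqref{steerbound2} for that $Y$. That half is correct, including the bookkeeping about preservation of the tilt condition and normalization under filtering.

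The genuine gap is in the ``if'' direction, and you flag it yourself: you correctly reduce \eqref{steerbound2} to condition \eqref{circum2} for $\rho'_{AB}=(\id_A\ot Y)\rho_{AB}(\id_A\ot Y)$, but you never actually produce the local hidden state model whose existence is the content of that implication. Your sketch --- two-step functions whose endpoints track the eigenprojectors $P_\pm(\theta)$ of $\deriv{\theta}\tilde{\rho}'_B(\theta)$, together with a measure $\mu$ of first moment $\rho'_B$ --- is left unverified: you do not show that the induced endpoint data defines a consistent nonnegative density on $\mathbb{RP}^1$, nor that its first moment equals $\rho'_B$ rather than merely being dominated by it, and it is precisely here that a proof could fail. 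The paper's construction (Case~1 of its proof) is more direct and sidesteps these issues entirely: with $\left|X\right|_\pm:=(|X|\pm X)/2$ and $\rho':=\int_0^{\pi}\left|\deriv{\theta}\tilde{\rho}_B(\theta)\right|\mathrm{d}\theta$, it takes the hidden variable to be $\lambda\in[0,2\pi)$ itself with operator-valued density
\begin{equation*}
\sigma_\lambda=\left|\deriv{\lambda}\tilde{\rho}_B(\lambda)\right|_+ +\frac{\rho_B-\rho'}{2\pi}\, ,
\end{equation*}
which is positive semidefinite precisely because of the assumed inequality $\rho_B\geq\rho'$, and pairs it with the response function $g_\theta$ equal to $0$ on $[\theta,\theta+\pi]$ and $1$ elsewhere; the identity $\int_\theta^{\theta+\pi}\deriv{\lambda}\tilde{\rho}_B(\lambda)\,\mathrm{d}\lambda=\tilde{\rho}_B(\theta+\pi)-\tilde{\rho}_B(\theta)$ then yields $\int_0^{2\pi}g_\theta(\lambda)\sigma_\lambda\,\mathrm{d}\lambda=\tilde{\rho}_B(\theta)$ in a few lines. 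Note also that the non-antipodal freedom of keyring endpoints, which you identify as the crucial ingredient for this direction, is not in fact what this construction uses (that freedom matters for Proposition~\ref{2stepprop} and the circumference bound); your intuition about where the difficulty lies steers you toward a harder construction than is needed.
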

\begin{corollary}\label{cor:main}
  Let $\rho_{AB}\in\cR\cD(\mathbb{C}^2\ot\mathbb{C}^2)$ be a two-qubit
  state whose steering ellipse has tilt $<1$. Then $\rho_{AB}$ is
  RP-steerable if and only if there exists
  $Y\in\cR\cP_>(\mathbb{C}^2)$ such that
\begin{eqnarray}
\label{steerbound_conv}
Y\rho_BY-\int_0^{\pi}\left|Y\deriv{\theta}\left(\tilde{\rho}_B\left(\theta\right)\right)Y\right|\mathrm{d}\theta\leq 0\, ,
\end{eqnarray}
with the left-hand-side not equal to 0.
\end{corollary}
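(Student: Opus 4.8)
The plan is to read Corollary~\ref{cor:main} as the logical complement of Theorem~\ref{steerthm2}. Writing $M(Y):=Y\rho_BY-\int_0^{\pi}\left|Y\deriv{\theta}\tilde{\rho}_B(\theta)Y\right|\mathrm{d}\theta$ for $Y\in\cR\cP_>(\mathbb{C}^2)$, a state is RP-steerable exactly when it is not RP-unsteerable, so Theorem~\ref{steerthm2} says that $\rho_{AB}$ is RP-steerable if and only if $M(Y)\ngeq 0$ for every $Y$. The task thus reduces to the purely operator-theoretic equivalence: no $Y$ has $M(Y)\geq 0$ if and only if some $Y$ has $M(Y)\leq 0$ with $M(Y)\neq 0$. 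The engine for both directions is the canonical filter from Lemma~\ref{ontolemma}: a $Y_0\in\mathcal{RD}_>(\mathbb{C}^2)$ for which $\int_0^{\pi}\left|Y_0\deriv{\theta}\tilde{\rho}_B(\theta)Y_0\right|\mathrm{d}\theta=s\,Y_0\rho_BY_0$ for some scalar $s\geq 0$ (this $\int_0^{\pi}$ form follows from the full-circle statement of Lemma~\ref{ontolemma} via $\deriv{\theta}\tilde{\rho}_B(\theta+\pi)=-\deriv{\theta}\tilde{\rho}_B(\theta)$). For this distinguished $Y_0$ one gets $M(Y_0)=(1-s)\,Y_0\rho_BY_0$, a scalar multiple of a positive-definite operator; hence $M(Y_0)$ is \emph{never} indefinite: it is $\geq 0$ precisely when $s\leq 1$ and $<0$ precisely when $s>1$.

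The forward direction (steerable $\Rightarrow$ the condition of the corollary) is then immediate. If $\rho_{AB}$ is RP-steerable then, by the reformulation above, $M(Y)\ngeq 0$ for every $Y$, and in particular $M(Y_0)\ngeq 0$. Since $M(Y_0)$ is a scalar multiple of the positive-definite operator $Y_0\rho_BY_0$, this forces $s>1$ and therefore $M(Y_0)<0$. Consequently $Y=Y_0$ satisfies \eqref{steerbound_conv} with a left-hand side that is nonzero, which is exactly what is required.

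For the converse I would argue by contradiction, reducing the operator inequality to a scalar (trace) inequality. Suppose some $Y_1$ satisfies $M(Y_1)\leq 0$ with $M(Y_1)\neq 0$; a nonzero negative semidefinite $2\times2$ matrix has strictly negative trace, so $\Tr M(Y_1)<0$, i.e.\ $\int_0^{\pi}\Tr\left|Y_1\deriv{\theta}\tilde{\rho}_B(\theta)Y_1\right|\mathrm{d}\theta>\Tr(Y_1\rho_BY_1)$. Since $\Tr\left|Y_1(\cdot)Y_1\right|$ is the trace-norm speed of the steering ellipse of $\rho'_{AB}=(\id_A\ot Y_1)\rho_{AB}(\id_A\ot Y_1)$, this says the normalized length of that ellipse exceeds $2$. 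If $\rho_{AB}$ were RP-unsteerable then, by Lemma~\ref{lem:M}, $\rho'_{AB}$ would be RP-unsteerable too, and Proposition~\ref{prop:circum} would bound its length by $2$ — a contradiction — provided the transformed ellipse still has tilt $<1$.

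That tilt proviso is exactly where I expect the real difficulty to sit: conjugation by a positive-definite $Y$ can push the tilt of the steering ellipse past $1$, and Proposition~\ref{prop:circum} (whose proof needs a positive-definite normal operator) does not apply in that regime. I would close this gap by a deformation argument: interpolate $Y_1$ to the identity (or to the canonical $Y_0$, whose transformed ellipse one checks has tilt $<1$) through positive-definite operators, using that $\Tr M$ varies continuously and that the strict inequality $\Tr M<0$ is an open condition while the tilt drops below $1$ along the path, so that at some intermediate filter both $\Tr M<0$ and tilt $<1$ hold and Proposition~\ref{prop:circum} delivers the contradiction. A cleaner packaging would first establish the master equivalence ``$\rho_{AB}$ is RP-unsteerable $\iff M(Y_0)\geq 0$'' — the nontrivial implication again coming from Proposition~\ref{prop:circum} applied to the tilt-$<1$ ellipse of $Y_0$, with the trace bound \emph{upgrading} to the full operator bound precisely because $M(Y_0)$ is a scalar times a positive-definite operator — and then read both directions of the corollary off the $s$-dichotomy. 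Either way, the crux is controlling the tilt of the transformed ellipses so that Proposition~\ref{prop:circum} can be invoked.
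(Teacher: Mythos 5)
Your overall route is the paper's: the forward direction extracts the canonical $Y_0$ from Lemma~\ref{ontolemma}, notes that $M(Y_0)$ is a scalar multiple of $Y_0\rho_BY_0$ and hence cannot be indefinite, and reads off the sign from Theorem~\ref{steerthm2}; the converse runs the circumference bound of Proposition~\ref{prop:circum} on the filtered state, exactly as in Case~3 of the paper's proof of Theorem~\ref{steerthm2} (where the authors remark that this is why the corollary follows). Your observation that only the trace of $M(Y_1)\leq 0$, $M(Y_1)\neq 0$ is needed to get circumference $>2$ is correct and lets you work with an arbitrary witness $Y_1$ rather than the canonical one.

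The one place you stop short --- the worry that conjugation by $Y$ could push the tilt of the steering ellipse past $1$, so that Proposition~\ref{prop:circum} would not apply to the filtered state --- is not actually a gap. The ellipse has tilt $<1$ exactly when it admits a \emph{positive definite} normal $H$, i.e.\ $\langle\tilde{\rho}_B(\theta),H\rangle$ is independent of $\theta$ with $H>0$. The filtered ellipse consists of the operators $Y\tilde{\rho}_B(\theta)Y$ (up to overall normalization), and
\begin{equation*}
\bigl\langle Y\tilde{\rho}_B(\theta)Y,\;Y^{-1}HY^{-1}\bigr\rangle=\bigl\langle\tilde{\rho}_B(\theta),H\bigr\rangle
\end{equation*}
is again independent of $\theta$, with $Y^{-1}HY^{-1}>0$. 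So the filtered ellipse always has tilt $<1$ whenever the original does, and Proposition~\ref{prop:circum} applies directly; the paper uses this silently when it applies that proposition to $\gamma_{AB}$ in Case~3. Your proposed deformation argument is therefore unnecessary, and as sketched it would be hard to make rigorous (there is no a priori reason the open condition $\Tr M<0$ and the tilt condition would be satisfied simultaneously somewhere along an arbitrary path). Replace it with the one-line computation above and your proof is complete.
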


Note that~\eqref{steerbound2} can be 
rewritten as 
\begin{eqnarray}
\label{steerbound}
\rho_B-\int_0^{\pi} \left| \deriv{\theta} \left( \tilde{\rho}_B \left(
                \theta \right) \right) \right|_Y \mathrm{d} \theta\geq 0\, .
\end{eqnarray}

The following result found
in~\cite{Quintino15} will be important for the proofs that follow.

\begin{lemma}\label{lem:M}
  If $\rho_{AB}$ has a LHS model (for any set of measurements), then so does
  $\langle(\cI\ot\cM)(\rho_{AB})\rangle$
  for any positive linear map $\cM$.
\end{lemma}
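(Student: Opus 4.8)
The plan is to observe that applying $\cI\ot\cM$ to $\rho_{AB}$ merely post-processes, on Bob's side, the hidden states appearing in any given LHS model, and then to absorb the accompanying change of normalization into a reweighting of the hidden-variable distribution. Concretely, suppose $\rho_{AB}$ admits a LHS model for a set of measurements $\{E^q_a\}$, described in the form used in the introduction by a distribution $\mu$ on a hidden-variable space, states $\sigma_\lambda\in\cD(\cH_B)$, and response functions $f_{q,a}$ with $\sum_a f_{q,a}(\lambda)=1$, so that $\tilde{\rho}_B^{\,q,a}=\Tr_A[(E^q_a\ot\id_B)\rho_{AB}]=\int_\lambda f_{q,a}(\lambda)\,\sigma_\lambda\,\mathrm{d}\mu(\lambda)$. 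First I would record the elementary locality identity that, because $\cM$ acts only on Bob's system and therefore commutes with the measurement and partial trace on Alice's system,
\begin{eqnarray}
\Tr_A[(E^q_a\ot\id_B)(\cI\ot\cM)(\rho_{AB})]=\cM(\tilde{\rho}_B^{\,q,a}).
\end{eqnarray}
This is verified by linearity applied to a decomposition $\rho_{AB}=\sum_{ijkl}c_{ij,kl}\,\ketbra{i}{j}\ot\ketbra{k}{l}$, pulling the $A$-trace through $\cM$.

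Next, applying $\cM$ to the LHS decomposition and using linearity of $\cM$ gives $\cM(\tilde{\rho}_B^{\,q,a})=\int_\lambda f_{q,a}(\lambda)\,\cM(\sigma_\lambda)\,\mathrm{d}\mu(\lambda)$. Since $\cM$ is a positive map and each $\sigma_\lambda$ is a state, $\cM(\sigma_\lambda)\in\cP_\geq(\cH_B)$; I set $c_\lambda:=\Tr(\cM(\sigma_\lambda))\geq 0$ and, where $c_\lambda>0$, let $\sigma'_\lambda:=\left<\cM(\sigma_\lambda)\right>$ be the corresponding normalized state. On the set $c_\lambda=0$ we have $\cM(\sigma_\lambda)=0$ (a positive semidefinite operator of zero trace vanishes), so $\sigma'_\lambda$ may be chosen arbitrarily there. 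I then define the candidate LHS model for the new state by keeping the same hidden-variable space and response functions $f_{q,a}$, but using the hidden states $\sigma'_\lambda$ and the reweighted distribution
\begin{eqnarray}
\mathrm{d}\mu'(\lambda):=\frac{c_\lambda}{N}\,\mathrm{d}\mu(\lambda),\qquad N:=\Tr[(\cI\ot\cM)(\rho_{AB})].
\end{eqnarray}

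The crux is to check that $\mu'$ is a genuine probability distribution and that this model reproduces the normalized target state. For the first point, since $\sum_a f_{q,a}=1$ we have $\int_\lambda\sigma_\lambda\,\mathrm{d}\mu=\rho_B$, whence $N=\Tr[\cM(\rho_B)]=\int_\lambda\Tr(\cM(\sigma_\lambda))\,\mathrm{d}\mu=\int_\lambda c_\lambda\,\mathrm{d}\mu$, so indeed $\int_\lambda\mathrm{d}\mu'=1$. For the second point, the reduced states of $\left<(\cI\ot\cM)(\rho_{AB})\right>$ after the measurement outcome $(q,a)$ are $\cM(\tilde{\rho}_B^{\,q,a})/N$ by the identity above, and
\begin{eqnarray}
\int_\lambda f_{q,a}(\lambda)\,\sigma'_\lambda\,\mathrm{d}\mu'(\lambda)
&=&\frac{1}{N}\int_\lambda f_{q,a}(\lambda)\,c_\lambda\,\sigma'_\lambda\,\mathrm{d}\mu(\lambda)\nonumber\\
&=&\frac{1}{N}\int_\lambda f_{q,a}(\lambda)\,\cM(\sigma_\lambda)\,\mathrm{d}\mu(\lambda)=\frac{\cM(\tilde{\rho}_B^{\,q,a})}{N},
\end{eqnarray}
as required, while the normalization condition $\sum_a f_{q,a}(\lambda)=1$ is inherited unchanged. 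Hence $(\mu',\{\sigma'_\lambda\},\{f_{q,a}\})$ is a LHS model for $\left<(\cI\ot\cM)(\rho_{AB})\right>$.

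The only genuine obstacle is this normalization bookkeeping: verifying that the trace weights $c_\lambda$ integrate to exactly $N$ and that the measure-zero set $\{c_\lambda=0\}$ can be discarded. This is precisely where positivity of $\cM$ (guaranteeing $c_\lambda\geq 0$ and $\sigma'_\lambda\in\cD(\cH_B)$) and the marginal identity $\int_\lambda\sigma_\lambda\,\mathrm{d}\mu=\rho_B$ enter; every remaining step is just linearity of $\cM$ and of the partial trace. I would finally note that, by the reductions already recorded in the excerpt, the hidden states $\sigma'_\lambda\in\cD(\cH_B)$ may be decomposed into projectors so that the resulting model fits the formal definition of a LHS model, completing the argument.
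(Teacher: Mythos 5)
Your proof is correct. The paper does not actually prove Lemma~\ref{lem:M} itself---it is imported by citation from~\cite{Quintino15}---and your argument (push the hidden states through $\cM$, renormalize each $\cM(\sigma_\lambda)$, reweight the measure by the traces $c_\lambda$, and use $\int_\lambda \sigma_\lambda\,\mathrm{d}\mu = \rho_B$ to verify that the reweighted measure is normalized) is precisely the standard argument from that reference, so there is nothing in the paper to compare it against and nothing to correct.
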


In particular, for any invertible Hermitian operator $Y$, $\rho_{AB}$
is RP-steerable if and only if
$\left\langle(\id_A\ot Y)\rho_{AB}(\id_A\ot Y)\right\rangle$ is RP-steerable.

\begin{proof}[Proof of Theorem~\ref{steerthm2}.]
  For any Hermitian operator $X$, define $|X|_\pm:=(|X|\pm X)/2$, and
  $\left\|X\right\|_{\pm} = \Tr\left|X\right|_{\pm}$.

\textbf{Case 1:} Suppose
\begin{eqnarray}
\label{circumineq}
\rho_B & \geq & \rho':=
 \int_0^{\pi} \left| \deriv{\theta} \left( \tilde{\rho}_B \left( \theta
                \right) \right) \right| \mathrm{d} \theta\, ,
 \end{eqnarray}
and define
\begin{eqnarray}
\sigma_\lambda & := & \left| \deriv{\lambda} \left( \tilde{\rho}_B \left( \lambda \right) \right) \right|_+
+ \frac{\rho_B - \rho'}{2\pi}.
\end{eqnarray}
Because $\tilde{\rho}_B ( \lambda + \pi ) = \rho_B - \tilde{\rho}_B ( \lambda)$,
the operator
$(\mathrm{d} / \mathrm{d} \lambda ) \tilde{\rho}_B ( \lambda + \pi ) $ is the negation of the
operator $ (\mathrm{d} / \mathrm{d} \lambda ) \tilde{\rho}_B ( \lambda )$, and so the
following equality also holds:
\begin{eqnarray}
\sigma_\lambda & = & \left| \deriv{\lambda} \left( \tilde{\rho}_B \left( \lambda + \pi \right) \right) \right|_-
+ \frac{\rho_B - \rho'}{2\pi}.
\end{eqnarray}

We proceed to construct a local hidden state model from
$\{ \sigma_\lambda \}_\lambda$.  We have the following:
\begin{eqnarray}
\int_0^{2 \pi}  \sigma_\lambda\ \mathrm{d} \lambda & = & \int_0^{2 \pi} \left| 
\deriv{\lambda} \tilde{\rho}_B \left( \lambda \right)  \right|_+ \mathrm{d} \lambda +
\rho_B - \rho'   \\
& = & \int_0^{\pi} \left| 
\deriv{\lambda} \tilde{\rho}_B \left( \lambda \right)  \right|_+ \mathrm{d} \lambda +
\int_{\pi}^{2 \pi} \left| 
\deriv{\lambda} \tilde{\rho}_B \left( \lambda \right)  \right|_+ \mathrm{d} \lambda +
( \rho_B - \rho' )  \\
& = & \int_0^{\pi} \left| 
\deriv{\lambda} \tilde{\rho}_B \left( \lambda \right)  \right|_+ \mathrm{d} \lambda +
\int_0^{\pi} \left| 
\deriv{\lambda} \tilde{\rho}_B \left( \lambda \right)  \right|_- \mathrm{d} \lambda +
( \rho_B - \rho' )  \\
& = & \int_0^{\pi} \left| 
\deriv{\lambda} \tilde{\rho}_B \left( \lambda \right)  \right| \mathrm{d} \lambda +
( \rho_B - \rho' )  \\
& = & \rho' + \rho_B - \rho' \\
& = & \rho_B.
\end{eqnarray}
For any $\theta \in [0, \pi]$ let $g_\theta \colon \mathbb{RP}^1 \to [0, 1]$ be equal to zero
on the interval $[\theta, \theta + \pi]$ and equal to $1$ elsewhere, and define
$g_\theta$ for $\theta \in (\pi, 2 \pi ]$ by $g_{\theta} = 1 - g_{\theta - \pi}$.
Then,
\begin{eqnarray*}
\int_0^{2 \pi}\!\!\! g_\theta ( \lambda) \sigma_\lambda\ \mathrm{d} \lambda  & = &
\frac{1}{2} \left[ \int_0^{2 \pi} ( 2 g_\theta ( \lambda) - 1) \sigma_\lambda\ \mathrm{d} \lambda
+ \int_0^{2 \pi } \sigma_\lambda\ \mathrm{d} \lambda \right] \\
& = &
\frac{1}{2} \left[-\!\int_\theta^{\theta + \pi \textbf{ mod } 2 \pi }  \sigma_\lambda\ \mathrm{d} \lambda
+ \int_{\theta+ \pi \textbf{  mod } 2 \pi}^{\theta + 2 \pi \textbf{ mod } 2 \pi }  \sigma_\lambda\ \mathrm{d} \lambda
+ \rho_B \right] \\
& = &
\frac{1}{2} \left[-\!\int_\theta^{\theta + \pi \textbf{ mod } 2 \pi }\!\! \left(\left| \deriv{\lambda} \tilde{\rho}_B ( \lambda ) \right|_+\!\!\! - \left| \deriv{\lambda} \tilde{\rho}_B ( \lambda ) \right|_-\right) \mathrm{d} \lambda\!
+\! \rho_B \right] \\
& = &
\frac{1}{2} \left[-\!\int_\theta^{\theta + \pi \textbf{ mod } 2 \pi } \deriv{\lambda} \tilde{\rho}_B ( \lambda )\ \mathrm{d} \lambda
+ \rho_B \right] \\
& = &
\frac{1}{2} \left[ - \tilde{\rho}_B ( \theta + \pi ) + \tilde{\rho}_B ( \theta ) 
+ \rho_B \right] = \tilde{\rho}_B ( \theta )\, .
\end{eqnarray*}
Thus $\{ \tilde{\rho}_B ( \theta ) \}_\theta$ has a local hidden state model.

\textbf{Case 2:} Suppose that there exists $Y\in\cR\cP_>(\mathbb{C}^2)$
such that
\begin{eqnarray}
Y\rho_BY & \geq & 
 \int_0^{\pi} \left|Y \deriv{\theta} \left( \tilde{\rho}_B \left( \theta
                \right) \right) Y\right| \mathrm{d}\theta\, .
 \end{eqnarray}

In this case, the state
\begin{eqnarray}
\overline{\rho_{AB}} = \langle(\id\ot Y) \rho_{AB} (\id \ot Y)\rangle 
\end{eqnarray}
satisfies the conditions of Case 1.  Since
$\cM:X\mapsto Y^{-1}XY^{-1}$ is a positive map, by Lemma~\ref{lem:M}, a
local hidden state model exists for $\rho_{AB}$.

\textbf{Case 3:} Suppose that for all $Y\in\cR\cP_>(\mathbb{C})$, 
\begin{eqnarray}
Y\rho_BY & \ngeq & I_Y:=
 \int_0^{\pi} \left|Y \deriv{\theta} \left( \tilde{\rho}_B \left( \theta \right) \right)Y \right| \mathrm{d}\theta
\end{eqnarray}
By Lemma~\ref{ontolemma}, we can find $Y$ such that $I_Y$ is a scalar
multiple of $Y\rho_B Y$ (this is why Corollary~\ref{cor:main} follows
from Theorem~\ref{steerthm2}).  Thus we have
\begin{eqnarray}
Y\rho_B Y&=& c  \int_0^{\pi} \left| Y\deriv{\theta}  \left( \rho_B \left( \theta \right) \right) Y\right| \mathrm{d}\theta
\end{eqnarray}
for some $c < 1$.    Letting $\gamma_{AB}=\langle(\id \ot Y) \rho_{AB} (\id\ot Y)\rangle$,
we have
\begin{eqnarray}
\gamma_B & = & 
c  \int_0^{\pi} \left| \deriv{\theta}  \left(  \gamma_B \left( \theta \right) \right) \right| \mathrm{d}\theta
\end{eqnarray}
which in particular means
\begin{eqnarray}
\int_0^{\pi} \left\| \deriv{\theta}  \left( \gamma_B \left( \theta \right) \right) \right\|_1 \mathrm{d}\theta & \geq &
(1/c) \Tr ( \gamma_B ) > 1.
\end{eqnarray}
By symmetry, replacing the upper limit ($\pi$) in the integral above has the effect of doubling its value; thus,
\begin{eqnarray}
\label{twopi}
\int_0^{2 \pi} \left\| \deriv{\theta}  \left(  \gamma_B \left( \theta \right) \right) \right\|_1 \mathrm{d}\theta & > &  2,
\end{eqnarray}
which implies by Proposition~\ref{prop:circum} that $\gamma$ (and therefore $\rho$) has no local hidden variable model.
\end{proof}

\section{Explicit calculations for steering ellipses}

\subsection{Application I: RP-steerability of Werner states}
It is interesting to see what this criteria gives for Werner states,
i.e., the family
$\rho_{AB}(\eta)=\eta\proj{\Phi_+}+(1-\eta)\id/4$ where $\eta\in[0,1]$
and $\ket{\Phi_+}=\frac{1}{\sqrt{2}}(\ket{00}+\ket{11})$.

\begin{theorem}\label{thm:Werner}
  States of the form $\rho_{AB}(\eta)$ are RP-unsteerable for
  $\eta\leq\frac{2}{\pi}$ and are RP-steerable for
  $\eta>\frac{2}{\pi}$.
\end{theorem}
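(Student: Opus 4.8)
The plan is to exploit the rotational symmetry of the Werner ellipse so that the operator $Y$ appearing in Theorem~\ref{steerthm2} can be taken to be the identity, reducing both directions to a single scalar comparison. First I would compute the steering ellipse explicitly. Writing $\proj{\theta}=\tfrac12(\id+\sin\theta\,\sigma_1+\cos\theta\,\sigma_3)$ and tracing out the $A$ system (using that $\Tr_A[(\proj{\theta}\ot\id)\proj{\Phi_+}]=\tfrac12\proj{\theta}$ and that the noise term contributes $\tfrac{1-\eta}{4}\id$), one finds
\begin{equation*}
\tilde{\rho}_B(\theta)=\tfrac14\id+\tfrac{\eta}{4}\sin\theta\,\sigma_1+\tfrac{\eta}{4}\cos\theta\,\sigma_3,
\end{equation*}
so that $\rho_B=\id/2$ and the ellipse is a circle of radius $\eta/2$ lying in the plane $n=1/2$. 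Its normal points along the $\id$-axis, so its tilt is $0<1$ and both Theorem~\ref{steerthm2} and Proposition~\ref{prop:circum} apply.

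Next I would differentiate to obtain $X(\theta):=\deriv{\theta}\tilde{\rho}_B(\theta)=\tfrac{\eta}{4}(\cos\theta\,\sigma_1-\sin\theta\,\sigma_3)$. The key computation is that $X(\theta)^2=(\eta/4)^2\id$, so $X(\theta)$ is traceless with eigenvalues $\pm\eta/4$, whence $|X(\theta)|=\tfrac{\eta}{4}\id$, independent of $\theta$. Consequently $\int_0^\pi|X(\theta)|\,\mathrm{d}\theta=\tfrac{\pi\eta}{4}\id$, and by~\eqref{eq:1norm} we have $\|X(\theta)\|_1=\eta/2$, so the circumference $\int_0^{2\pi}\|X(\theta)\|_1\,\mathrm{d}\theta$ equals $\pi\eta$.

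For the unsteerable direction ($\eta\le 2/\pi$) I would apply Theorem~\ref{steerthm2} with $Y=\id$: condition~\eqref{steerbound2} then reads $\rho_B\geq\int_0^\pi|X(\theta)|\,\mathrm{d}\theta$, i.e.\ $\tfrac12\id\geq\tfrac{\pi\eta}{4}\id$, which holds precisely when $\eta\le 2/\pi$; hence $\rho_{AB}(\eta)$ is RP-unsteerable in this range. For the steerable direction ($\eta>2/\pi$) I would invoke Proposition~\ref{prop:circum}: since the tilt is $<1$ and the circumference equals $\pi\eta>2$, the ellipse admits no local hidden state model, so $\rho_{AB}(\eta)$ is RP-steerable. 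The boundary value $\eta=2/\pi$ lands on the unsteerable side, where~\eqref{steerbound2} holds with equality, consistent with the statement.

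I do not expect a genuine obstacle here. The entire difficulty of the general theory lies in locating the tilting operator $Y$ that renders the steering operator a scalar multiple of $\rho_B$ (Lemma~\ref{ontolemma}), but the rotational symmetry of the Werner ellipse forces $|X(\theta)|$ to be a constant multiple of the identity and makes $\rho_B$ maximally mixed, so $Y=\id$ already achieves the required scalar relation and no search is needed. The only substantive step is the identity $|X(\theta)|=\tfrac{\eta}{4}\id$, which follows immediately from $X(\theta)$ being traceless with $X(\theta)^2\propto\id$; everything else reduces to the scalar inequality $\tfrac12\gtrless\tfrac{\pi\eta}{4}$.
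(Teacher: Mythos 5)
Your proposal is correct and follows essentially the same route as the paper: compute the circular steering ellipse, observe $|\deriv{\theta}\tilde{\rho}_B(\theta)|=\tfrac{\eta}{4}\id$, and apply the criterion with $Y=\id$ to reduce everything to $\tfrac12\gtrless\tfrac{\pi\eta}{4}$. The only cosmetic difference is that for the steerable direction you invoke Proposition~\ref{prop:circum} directly while the paper cites Corollary~\ref{cor:main} with $Y=\id$; these coincide here since the corollary's proof in this symmetric case is exactly the circumference argument.
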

\begin{proof}
The steering
ellipses for these states are
$\tilde{\rho}_B(\theta)=\frac{1}{4}\left(
\begin{array}{cc}
 1+\eta\cos \theta & \eta  \sin \theta \\
 \eta  \sin \theta & 1-\eta  \cos \theta
\end{array}
\right)$
and have zero tilt for all $\eta$ (since all these states have the
same trace, the difference between any two states on the ellipse is
orthogonal to $\id/2$). The derivative with respect to $\theta$ is
$\deriv{\theta}(\tilde{\rho}_B(\theta))=\frac{\eta}{4}\left(
\begin{array}{cc}
 -\sin \theta & \cos \theta \\
 \cos \theta & \sin \theta \\
\end{array}
\right)$
which has
$|\deriv{\theta}(\tilde{\rho}_B(\theta))|=\frac{\eta}{4}\id$.  Hence,
$$\rho_B-\int_0^{\pi} \left|\deriv{\theta} \left( \tilde{\rho}_B \left(
      \theta \right) \right) \right| \mathrm{d}
\theta=\id/2-\frac{\pi\eta}{4}\id\, .$$
Applying Theorem~\ref{steerthm2} and Corollary~\ref{cor:main} with
$Y=\id$ we have that Werner states are RP-unsteerable if
$\frac{\pi\eta}{4}\leq\frac{1}{2}$, i.e.,
$\eta\leq\frac{2}{\pi}\approx 0.637$ and are RP-steerable if
$\eta>\frac{2}{\pi}$.
\end{proof}
Note that this boundary was already known~\cite{Jones11,Uola16}, and
that it is possible to get close to this bound with small numbers of
measurements~\cite{Jones11,Bavaresco}.

\subsection{Application II: RP-steerability of partially entangled
  states mixed with uniform noise}
Consider the family $\rho_{AB}(\alpha,\eta):=\eta\proj{\phi_\alpha}+(1-\eta)\id/4$, where
$\ket{\phi_\alpha}:=\cos\alpha\ket{00}+\sin\alpha\ket{11}$ for
$0\leq\alpha\leq\frac{\pi}{4}$.  The
steering ellipses for these states are $\tilde{\rho}_B^{\,\alpha,\eta}(\theta)=\left(
\begin{array}{cc}
 \eta  \cos ^2(\alpha) \cos ^2\left(\frac{\theta}{2}\right)-\frac{\eta
   }{4}+\frac{1}{4} & \frac{1}{2} \eta  \cos (\alpha) \sin (\alpha)
   \sin (\theta) \\
 \frac{1}{2} \eta  \cos (\alpha) \sin (\alpha) \sin (\theta) & \eta  \sin
   ^2(\alpha) \sin ^2\left(\frac{\theta}{2}\right)-\frac{\eta }{4}+\frac{1}{4}
\end{array}
\right)$
and are plotted in the Bloch representation in
Fig.~\ref{fig:ellipses}.

\begin{figure}
\includegraphics[width=0.4\textwidth]{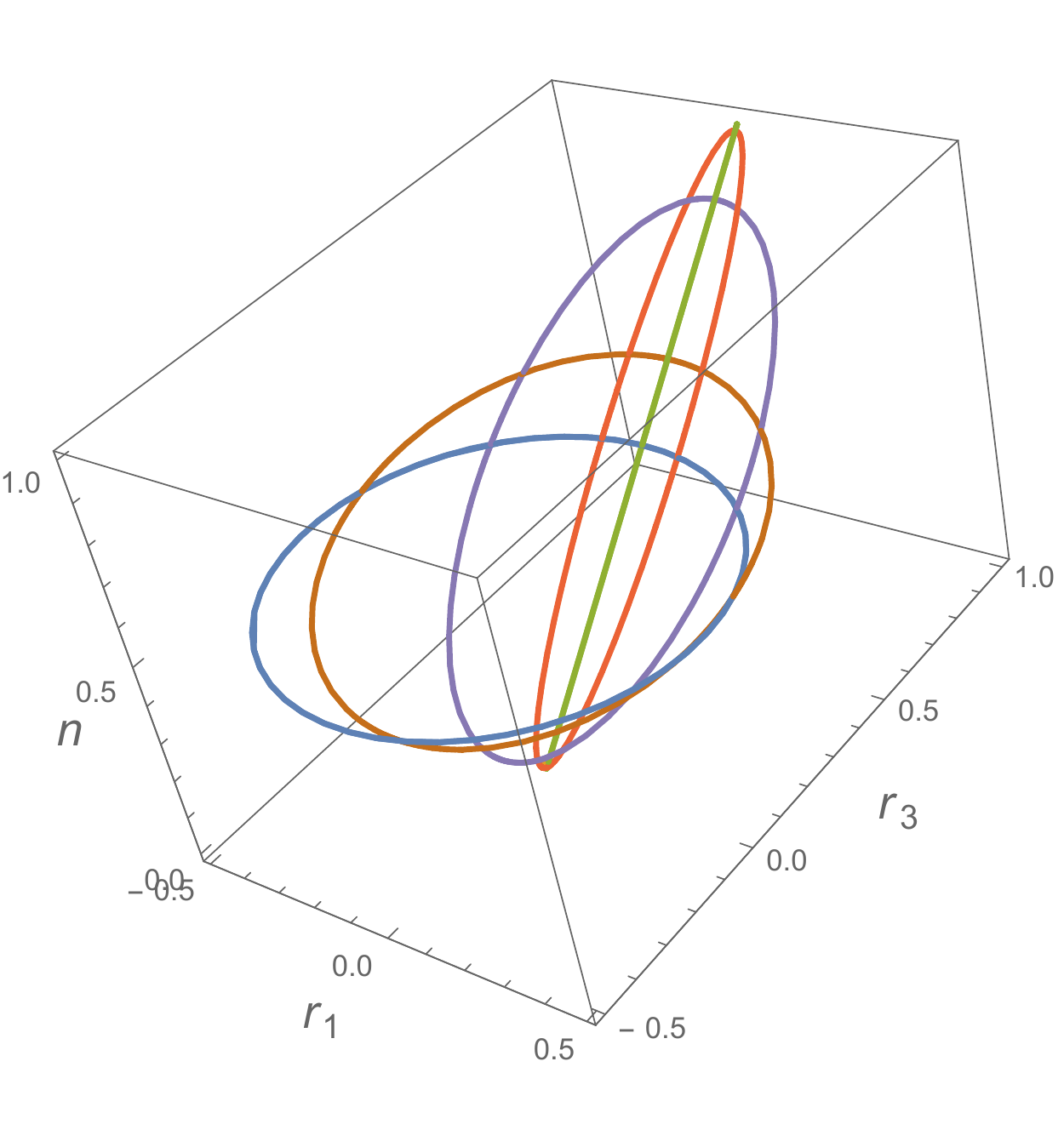} \includegraphics[width=0.4\textwidth]{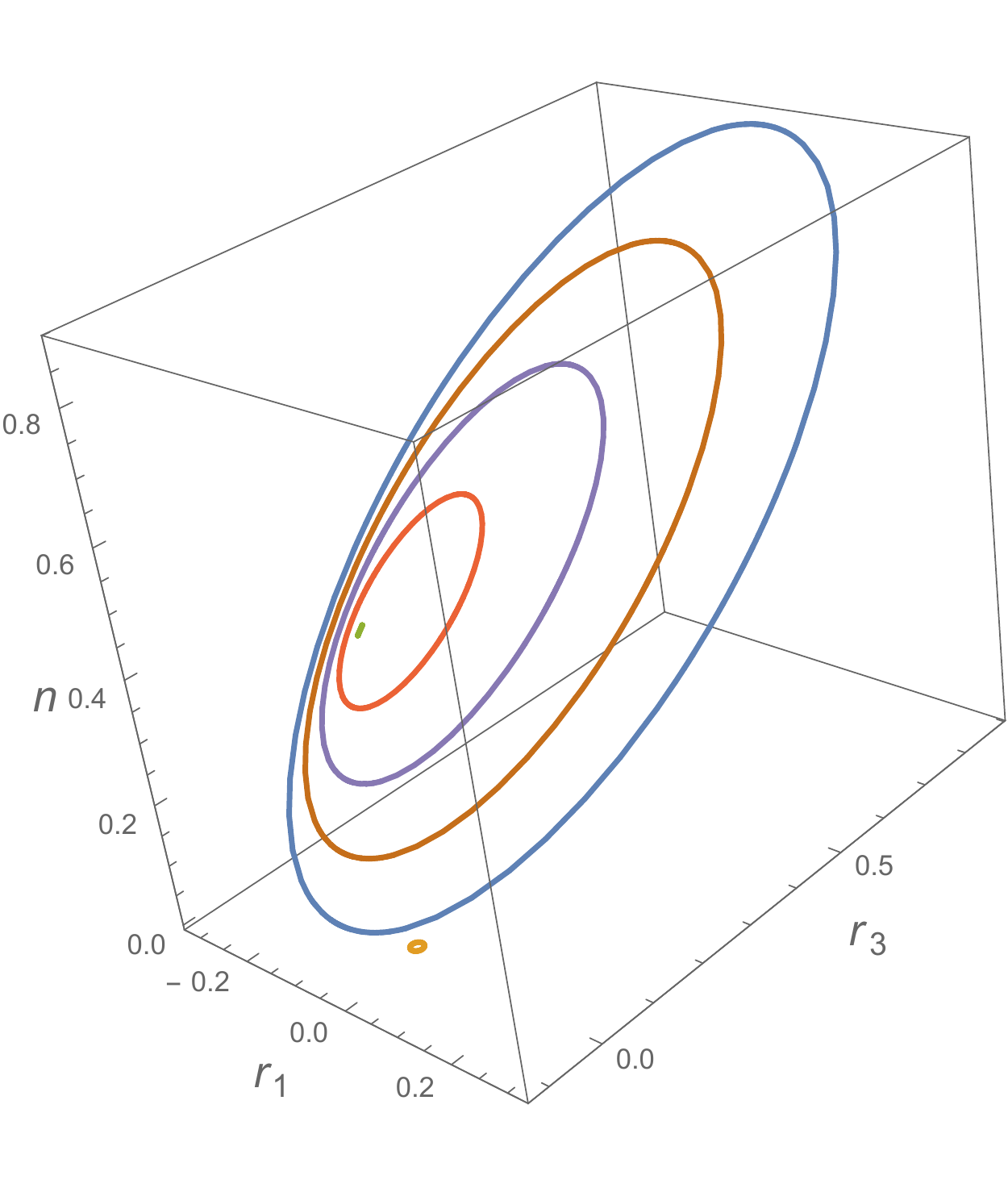} 
\center{(a)\qquad\qquad\qquad\qquad\qquad\qquad\qquad\qquad\qquad\qquad\qquad (b)}
\caption{(a) Steering ellipses in the Bloch representation for
  $\eta=1$, $\alpha=\pi/4$ (blue), $0.65$ (brown), $0.35$ (purple),
  $0.1$ (red) and $0$ (green); (b) $\alpha=0.35$ and $\eta=1$
  (blue), $\frac{3}{4}$ (brown), $\frac{1}{2}$ (purple), $\frac{1}{4}$
  (red) and $\eta=0.01$ (green).  The small yellow circle on the right
  marks the origin.}
\label{fig:ellipses}
\end{figure}

One can verify that for
$A_\alpha=\left(\begin{array}{cc}\sin^2\alpha&0\\0&\cos^2\alpha\end{array}
\right)$,
$\Tr(A_{\alpha}\tilde{\rho}_B^{\,\alpha,\eta}(\theta))=\frac{1}{8}(2-\eta(1-\cos(4\alpha)))$,
which is independent of $\theta$.  $A_\alpha$ is hence normal to the
steering ellipse and so the tilt of the ellipse is
$\cos(2\alpha)\leq 1$, and approaches~$1$ as $\alpha$ approaches~$0$.
\begin{remark}
  The tilt is independent of $\eta$ and hence the steering ellipse for
  any two-qubit pure state has tilt at most~1.
\end{remark}

We have
$\rho_B(\alpha,\eta)=\frac{1}{2}(1+\eta\cos(2\alpha))\proj{0}+\frac{1}{2}(1-\eta\cos(2\alpha))\proj{1}$.

The derivative of the steering ellipse with respect to $\theta$ is
\begin{equation}\label{deriv:1}
\deriv{\theta}\tilde{\rho}_B^{\,\alpha,\eta}(\theta)=\frac{\eta}{2}\left(
\begin{array}{cc}
 -\cos ^2(\alpha) \sin\left(\theta\right)& \cos (\alpha) \sin (\alpha)   \cos (\theta) \\
   \cos (\alpha) \sin (\alpha) \cos (\theta) & \sin^2(\alpha) \sin\left(\theta\right)
\end{array}
\right)\, .
\end{equation}

For $\alpha=\frac{\pi}{4}$ the case is as before.  To investigate
other values of $\alpha$, we note that, by Remark~\ref{rmk:convex}, if
$\rho_{AB}(\alpha,\eta)$ has a LHS model, then so does
$\rho_{AB}(\alpha,\eta')$ for $\eta'<\eta$.  Thus, for each $\alpha$
there is a critical value $\bar{\eta}(\alpha)$ such that
$\rho_{AB}(\alpha,\eta)$ is RP-steerable for $\eta>\bar{\eta}(\alpha)$
and is RP-unsteerable for $\eta\leq\bar{\eta}(\alpha)$.  We search for
this critical value numerically.  

Since $Y$ has real entries, is positive and multiplying by a constant
doesn't affect whether~\eqref{steerbound2} holds, we can take $Y$ to
have $\Tr(Y)=1$ and parameterize it in terms of two parameters $r_1$
and $r_3$ using a plane of the Bloch sphere via
$Y=\frac{1}{2}(\id+r_1\sigma_1+r_3\sigma_3)$. To do the search we use
the following subroutines:
\begin{enumerate}
\item\label{sub1} For fixed $\alpha$ and $\eta$ this searches over
  $r_1,r_3$ to find the largest value of the minimum eigenvalue of the
  expression on the left of~\eqref{steerbound2}. This uses gradient
  ascent with decreasing step-size, terminating when no improvement
  can be found for some minimal step-size, or when $r_1,r_3$ are found
  such that the minimum eigenvalue is positive
  (i.e.,~\eqref{steerbound2} is satisfied).  The output is either the
  largest value found or the first positive value found.
\item\label{sub2} This is analogous to Subroutine~\ref{sub1}, except
  it searches for the smallest value of the maximum eigenvalue of the
  expression on the left of~\eqref{steerbound2}, terminating either
  when a negative value is obtained or when no improvement can be found
  for some minimal step-size.
\item\label{sub3} For fixed $\alpha$, this uses Binary Search to find
  the largest $\eta$ for which Subroutine~\ref{sub1} returns a
  positive value, for some number of search steps.
\item\label{sub4} For fixed $\alpha$, this uses Binary Search to find
  the smallest $\eta$ for which Subroutine~\ref{sub2} returns a
  negative value, for some number of search steps.
\end{enumerate}

Subroutine~\ref{sub3} hence gives a certified lower bound on
$\bar{\eta}(\alpha)$ and Subroutine~\ref{sub4} a certified upper
bound.  By varying the step-sizes and number of steps, in principle,
we can make the gap between these as small as we like (in practice,
the limits of machine precision provide a cut-off).

Note that if Subroutine~\ref{sub1} has a negative output, we cannot
strictly rule out that there exists a $Y$ such that
condition~\eqref{steerbound2} holds: in principle a smaller
step-size might reveal a suitable $Y$.  This is why we use
Subroutine~\ref{sub2} in parallel.

The result is given in Figure~\ref{introfig} (although the plot only
shows $\eta>0.6$, the region extends to $\eta=0$).

\subsection{RP-steerability of depolarizing channel states}
Consider a source that generates an entangled state that is sent to
two parties via two depolarizing channels with parameters $\eta_A$ and
$\eta_B$,
i.e., these channels take
$$\cS(\mathbb{C}^2\ot\mathbb{C}^2)\to\cS(\mathbb{C}^2\ot\mathbb{C}^2):\rho_{AB}\mapsto\hat{\rho}_{AB}:=(\cE_{\eta_A}\ot\cE_{\eta_B})(\rho_{AB})\,,$$
where $\cE_\eta:\cS(\mathbb{C}^2)\to\cS(\mathbb{C}^2)$ is given by $\cE_\eta(\rho)=\eta\rho+(1-\eta)\id/2$.

For $\rho_{AB}=\proj{\Phi_+}$, this channel leads to Werner states
(with parameter $\eta_A\eta_B$ instead of $\eta$).  The states
are hence RP-unsteerable iff $\eta_A\eta_B\leq\frac{2}{\pi}\approx0.637$.

More generally, for $\rho_{AB}=\proj{\phi_\alpha}$, we call the state
after the channel $\hat{\rho}_{AB}(\alpha,\eta_A,\eta_B)$ and note
that
$$\hat{\rho}_B=\frac{1}{2}\left((1+\eta_B\cos(2\alpha))\proj{0}+(1-\eta_B\cos(2\alpha))\proj{1}\right)$$
is independent of $\eta_A$.  The steering ellipse for such a state is
$$\tilde{\rho}_B^{\,\alpha,\eta_A,\eta_B}(\theta)=\frac{1}{4}\left(
\begin{array}{cc}
  1+\eta_A\cos(2\alpha)\cos(\theta)+\eta_B(\eta_A\cos(\theta)+\cos(2\alpha))
    & \eta_A\eta_B\sin(2\alpha)\sin(\theta) \\
  \eta_A\eta_B\sin(2\alpha)\sin(\theta) & 1+\eta_A\cos(2\alpha)\cos(\theta)-\eta_B(\eta_A\cos(\theta)+\cos(2\alpha))
\end{array}
\right)$$

For $A_{\alpha,\eta_A,\eta_B}=\left(
\begin{array}{cc}\frac{\eta_B-\cos(2\alpha)}{2\eta_B}&0\\0&\frac{\eta_B+\cos(2\alpha)}{2\eta_B}\end{array}\right)$,
we have
$\Tr(A_{\alpha,\eta_A,\eta_B}\tilde{\rho}_B^{\,\alpha,\eta_A,\eta_B}(\theta))=\frac{1}{2}\sin^2(2\alpha)$,
which is independent of $\theta$.  Hence $A_{\alpha,\eta_A,\eta_B}$ is
the normal to the steering ellipse, and the ellipse has tilt
$\frac{\cos(2\alpha)}{\eta_B}$.  This is less than 1 for
$\eta_B>\cos(2\alpha)$, so we can use Theorem~\ref{steerthm2} and
Corollary~\ref{cor:main} provided this holds.

The derivative of the steering ellipse is 
\begin{equation}
\deriv{\theta}\tilde{\rho}_B^{\,\alpha,\eta_A,\eta_B}(\theta)=\frac{\eta_A}{4}\left(
\begin{array}{cc}
  -(\eta_B+\cos(2\alpha))\sin(\theta)
  &  
    \eta_B\sin(2\alpha)\cos(\theta) \\
  \eta_B\sin(2\alpha)\cos(\theta) & (\eta_B-\cos(2\alpha))\sin(\theta)
\end{array}
\right)\, .
\end{equation}

Since this is proportional to $\eta_A$, the amount of noise on Alice's
side (the untrusted side), the case of noise only on Bob's side is
representative of the general case.

We first make two observations for special cases, before proceeding
with the general case:
\begin{enumerate}
\item If there is no noise on Bob's side (i.e., the trusted side),
  i.e., if $\eta_B=1$, then
  $\deriv{\theta}\tilde{\rho}_B^{\,\alpha,\eta_A,1}(\theta)$ is
  identical to that in~\eqref{deriv:1}, and the tilt of the steering
  ellipse of $\rho_{AB}(\alpha,\eta_A,1)$ is $\cos(2\alpha)\leq 1$, so
  we obtain the same result.
\item If the state is maximally entangled, i.e.,
  $\alpha=\frac{\pi}{4}$, then the situation is exactly the same as
  for a Werner state with $\eta=\eta_A\eta_B$. In other words,
  $\eta_A\eta_B\leq\frac{2}{\pi}$ is a necessary and sufficient
  condition for RP-unsteerability of a state of the form
  $\rho_{AB}(\pi/4,\eta_A,\eta_B)$.
\end{enumerate}

We study the general case numerically, using similar techniques to
before.  The results are shown in Figure~\ref{fig:gen}.

\begin{figure}
\includegraphics[width=0.4\textwidth]{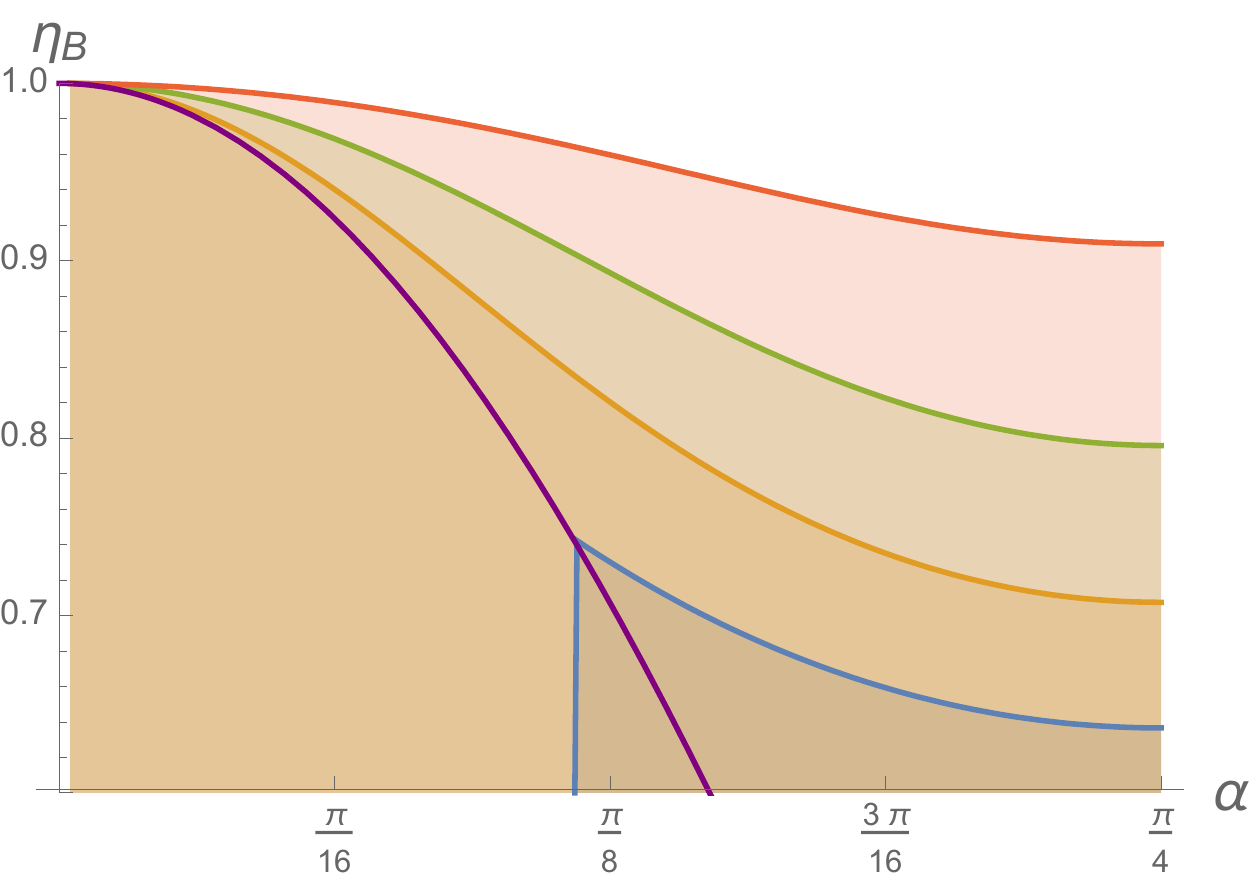}
\caption{Plot of the regions where a LHS model exists for all real
  projective measurements for $\eta_A=1$ (blue), $\eta_A=0.9$ (orange)
  and $\eta_A=0.8$ (green), $\eta_A=0.7$ (red) (although not shown,
  all regions extend downwards to $\eta_B=0$), together with the
  purple curve $\eta_B=\cos(2\alpha)$ which we need to be above to use
  Theorem~\ref{steerthm2} and Corollary~\ref{cor:main}.  In the case
  $\eta_A=\frac{2}{\pi}$ (not shown), the state is RP-unsteerable for
  all $\eta_B$ and $\alpha$. For $\eta_A=0.9$, $0.8$ and $0.7$ we have
  a complete classification: above each of the corresponding regions,
  the state is RP-steerable.  In the case $\eta_A=1$, the
  classification is incomplete for $\alpha\lessapprox0.37$. Here, if
  $\eta_B\leq\cos(2\alpha)$ we are unable to decide whether or not the
  states are RP-steerable (while for $\eta_B>\cos(2\alpha)$ we know
  the states are RP-steerable).}
\label{fig:gen}
\end{figure}

The left hand side of~\eqref{steerbound2} becomes easier to satisfy
for lower $\eta_A$ and so the region of RP-unsteerability increases as
$\eta_A$ is lowered.  In other words, if
$\hat{\rho}_{AB}(\alpha,\eta_A,\eta_B)$ is RP-unsteerable, then so is
$\hat{\rho}_{AB}(\alpha,\eta'_A,\eta_B)$ for $\eta'_A\leq\eta_A$.  At
$\eta_A=\frac{2}{\pi}$ the state is RP-unsteerable for all $\eta_B$
and $\alpha$.

Note that the regions shown in the above plot extend below the purple
curve, although the condition on the tilt of the steering ellipse
ceases to be satisfied there.  To extend to this region we use the
fact that more noise (lower $\eta_B$) makes a LHS model easier to
construct.  This is stated in the following lemma.
\begin{lemma}
  If $\hat{\rho}_{AB}(\alpha,\eta_A,\eta_B)$ has a LHS model (for any
  set of measurements), then so does
  $\hat{\rho}_{AB}(\alpha,\eta_A,\eta_B')$ for all $\eta_B'<\eta_B$.
\end{lemma}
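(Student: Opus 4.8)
The plan is to realize the noisier state $\hat{\rho}_{AB}(\alpha,\eta_A,\eta_B')$ as the image of the original state $\hat{\rho}_{AB}(\alpha,\eta_A,\eta_B)$ under a positive linear map acting on Bob's system alone, and then to invoke Lemma~\ref{lem:M}. The essential ingredient is the semigroup structure of qubit depolarizing channels under composition. A direct computation gives $\cE_\eta\circ\cE_{\eta'}=\cE_{\eta\eta'}$, since $\cE_\eta(\cE_{\eta'}(\rho))=\eta\eta'\rho+(1-\eta\eta')\id/2$. This is what will let me ``split off'' extra depolarizing noise on the trusted side.

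Concretely, I may assume $\eta_B>0$, since the case $\eta_B=0$ is vacuous (there is no admissible $\eta_B'<0$). I would then set $r:=\eta_B'/\eta_B\in[0,1)$, so that $\cE_r\circ\cE_{\eta_B}=\cE_{\eta_B'}$ by the semigroup identity above. Using the definition $\hat{\rho}_{AB}(\alpha,\eta_A,\eta_B)=(\cE_{\eta_A}\ot\cE_{\eta_B})(\proj{\phi_\alpha})$ together with the interchange law for tensor products of maps, this yields
\begin{eqnarray}
\hat{\rho}_{AB}(\alpha,\eta_A,\eta_B')
&=& (\cE_{\eta_A}\ot\cE_{\eta_B'})(\proj{\phi_\alpha}) \nonumber\\
&=& (\cE_{\eta_A}\ot(\cE_r\circ\cE_{\eta_B}))(\proj{\phi_\alpha}) \nonumber\\
&=& (\cI\ot\cE_r)\left[(\cE_{\eta_A}\ot\cE_{\eta_B})(\proj{\phi_\alpha})\right] \nonumber\\
&=& (\cI\ot\cE_r)\left[\hat{\rho}_{AB}(\alpha,\eta_A,\eta_B)\right], \nonumber
\end{eqnarray}
where $\cI$ denotes the identity channel on Alice's system.

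To finish, I would observe that since $r\in[0,1]$, the map $\cE_r$ is a convex combination of the identity channel and the completely depolarizing channel, hence completely positive and in particular a positive linear map; it is moreover trace preserving, so the right-hand side above already has unit trace and the normalization $\langle\cdot\rangle$ of Lemma~\ref{lem:M} acts trivially. Applying Lemma~\ref{lem:M} with $\cM=\cE_r$ then shows that whenever $\hat{\rho}_{AB}(\alpha,\eta_A,\eta_B)$ has a LHS model (for any set of measurements), so does $\hat{\rho}_{AB}(\alpha,\eta_A,\eta_B')$, which is exactly the claim. I do not expect a genuine obstacle here: the only points requiring care are verifying the semigroup identity and checking that $r\le 1$, so that $\cE_r$ is a legitimate positive map; both are immediate, and all the conceptual work is carried by Lemma~\ref{lem:M}, which licenses applying a positive map on the trusted (Bob) side without destroying the existence of a hidden-state model.
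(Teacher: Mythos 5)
Your proof is correct. The semigroup identity $\cE_r\circ\cE_{\eta_B}=\cE_{\eta_B'}$ with $r=\eta_B'/\eta_B$ checks out, $\cE_r$ is indeed positive and trace-preserving for $r\in[0,1]$, and Lemma~\ref{lem:M} applies exactly as you invoke it. The paper takes a slightly different route: it writes the noisier state explicitly as the convex combination
$\frac{\eta_B'}{\eta_B}\hat{\rho}_{AB}(\alpha,\eta_A,\eta_B)+\frac{\eta_B-\eta_B'}{\eta_B}\,\hat{\rho}_A(\alpha,\eta_A,\eta_B)\ot\id/2$
and appeals to Remark~\ref{rmk:convex} (convexity of the set of states admitting a LHS model) together with the trivial observation that a product state has a LHS model. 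The two arguments rest on the same identity, since $(\cI\ot\cE_r)(\sigma_{AB})=r\,\sigma_{AB}+(1-r)\,\sigma_A\ot\id/2$; the difference is only in which lemma licenses the conclusion. The paper's version is more elementary and self-contained, needing nothing beyond convexity; yours leans on the heavier Lemma~\ref{lem:M} (the Quintino et al.\ result) but is arguably cleaner and generalizes immediately to any positive map one might wish to compose onto the trusted side, not just residual depolarizing noise.
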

\begin{proof}
  This follows from Remark~\ref{rmk:convex} and the fact that
  $\hat{\rho}_{AB}(\alpha,\eta_A,\eta_B')$ is equal to $\frac{\eta_B'}{\eta_B}\hat{\rho}_{AB}(\alpha,\eta_A,\eta_B)+\frac{\eta_B-\eta_B'}{\eta_B}\hat{\rho}_A(\alpha,\eta_A,\eta_B)\ot\id/2,$
  i.e., is a convex combination of $\hat{\rho}_{AB}(\alpha,\eta_A,\eta_B)$ and
  $\hat{\rho}_A(\alpha,\eta_A,\eta_B)\ot\id/2$, both of which have LHS models.
\end{proof}
Hence, although we cannot use Theorem~\ref{steerthm2} throughout the
$\alpha$-$\eta_B$ plane, we can nevertheless establish steerability of
all states of the form $\hat{\rho}_{AB}(\alpha,1,\eta_B)$ for
$\alpha\gtrapprox 0.37$ (for example).  Furthermore, the numerics point
to the existence of a critical value around $0.92$ such that for
values of $\eta_A$ below this we can always use our criteria
(graphically, the boundary of the region in which a LHS model exists
always lies above $\eta_B=\cos (2\alpha)$ for $\eta_A\lessapprox0.92$).

\acknowledgements

We are grateful to Kim Winick for numerous helpful discussions, to
Emanuel Knill, Sania Jevtic, Stephen Jordan, and Chau Nguyen for
useful feedback on an earlier version of the manuscript, and to
Nicholas Brunner, Daniel Cavalcanti and Ivan Supic for pointers to the
literature. RC is supported by the EPSRC's Quantum Communications Hub
(grant number EP/M013472/1) and by an EPSRC First Grant (grant number
EP/P016588/1).  CAM and YS were supported in part by US NSF grants 1500095,
1526928, and 1717523. YS was also supported in part by University of Michigan.

\appendix

\section{Summary of known results for Werner states}
\label{wernersubsec}

Werner states (cf.\ \eqref{eq:werner}) are separable if and only if
$\eta\leq\frac{1}{3}$~\cite{Werner}, are steerable if
$\eta>\frac{1}{2}$~\cite{Wiseman:2007} and are non-local if
$\eta>1/K_G(3)$~\cite{AGT06}, where $K_G(3)$ is Grothendieck's
constant of order 3~\cite{Grothendieck}, which is known to satisfy
$1.426<K_G(3)<1.464$, so that
$0.683<1/K_G(3)<0.701$~\cite{Brierly,Hirsch17}.  They are local for
projective measurements if $\eta\leq 1/K_G(3)$~\cite{AGT06} and are
local for all measurements for $\eta\leq0.455$~\cite{Hirsch17} and
also have a LHS model for all measurements for $\eta\leq
5/12$~\cite{Barrett02,Quintino15}.  For $1/3<\eta\leq5/12$ the
states are non-separable and unsteerable.  For
$\frac{1}{2}<\eta\leq\frac{1}{K_G(3)}$ the states are local for
projective measurements and steerable. It is unknown whether these
states are local for all measurements anywhere in this range, which
would show steerability $\nimplies$ non-locality, however, this
non-implication is known using another family of
states~\cite{Quintino15}.

The above is summarized in
Figure~\ref{fig:wernersummary}.

\begin{figure}
\includegraphics{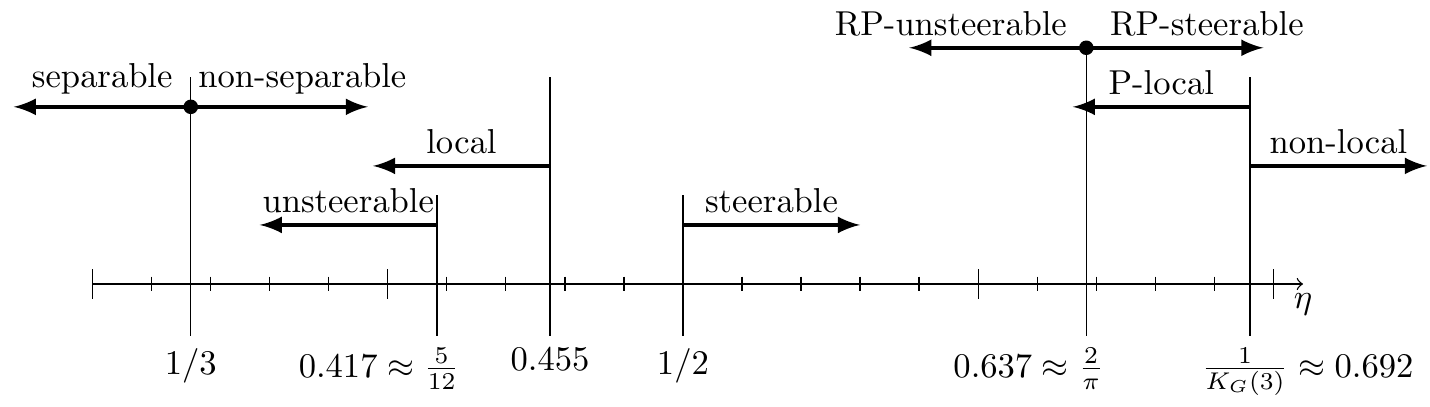}
\caption{Summary of known results for Werner states.  The
  approximation taken for $1/K_G(3)$ is the mean of the known upper
  and lower bounds.}
\label{fig:wernersummary}
\end{figure}

\section{Additional Proofs}

\subsection{Proof of Proposition~\ref{2stepprop}}\label{app:2prop}
This proof uses similar methods to those
in~\cite{Nguyen:2016}.

The proof will be divided into two cases: (1) the case where $z$ lies
on the boundary of $\B(\mu)$, and (2) the case where $z$ lies in the
interior of $\B(\mu)$.

  (1) In the case where $z$ lies on the boundary of $\B(\mu)$,
  because $\B(\mu)$ is convex, there must exist
  $H\in\mathcal{RA}(\mathbb{C}^2)$ such that the function
  $x\mapsto\left< x,H\right>$ on $\B(\mu)$ is maximized at
  $z$. We subdivide into three cases depending on $H$.

  \textit{Case 1a:} The element $z$ is on the boundary and $H>0$.

  The operator
  \begin{eqnarray}
    \rho&=&\int_{x\in\mathbb{RP}^1}x\ \mathrm{d}\mu 
  \end{eqnarray}
  is greater than or equal to $z$, so
  $\left< \rho-z,H\right>\geq 0$.  But this quantity cannot
  exceed $0$ by assumption, so $\left< \rho-z,H\right>=0$,
  which yields $\rho=z$.  Since the constant function
  $\mathbb{RP}^1\to\{1\}$ satisfies the definition of a two-step
  function, we are done.

  \textit{Case 1b:} The element $z$ is on the boundary and
  $H \ngeq 0$.

  In this case, there are unique distinct elements
  $y,w\in\mathbb{RP}^1$ such that
  $\left< y,H\right>=\left< w,H\right>=0$,
  $\left< x,H\right> >0$ for all $x\in(y,w)$, and
  $\left< x,H\right> <0$ for all $x\in(w,y)$.  Choose a
  function $f\colon\mathbb{RP}^1\to[0,1]$ such that
  \begin{eqnarray}\label{eq:z}
    z=\int_{x\in \mathbb{RP}^1}xf(x)\ \mathrm{d}\mu 
  \end{eqnarray}
  (such a function must exist because $z\in\B(\mu)$).  Let $g$ be the
  two step-function
  \begin{eqnarray}
    g(x)&=&\left\{\begin{array}{ccl} 1 & \textnormal{ if }
                            & x \in (y,w) \\
                            0 & \textnormal{ if } & x \in (w,y) \\
                            f(y) & \textnormal{ if } & x=y \\
                            f(w) & \textnormal{ if } & x=w. 
                          \end{array} \right.
  \end{eqnarray}
  and let
  \begin{eqnarray}
    r&=&\int_{x\in\mathbb{RP}^1}xg(x)\ \mathrm{d}\mu \, .
  \end{eqnarray}
  Since $r\in\B(\mu)$, $\left< r,H\right>\leq\left<z,H\right>$.
  Hence we have 
  \begin{eqnarray}
    0 \geq \left<r-z,H\right>&=&\left<\int_{x\in(y, w)}(1-f(x))x\ \mathrm{d}\mu ,H\right>-\left<\int_{x\in(w, y)}f(x)x\ \mathrm{d}\mu,H \right>\geq 0\, ,\label{zrexp}
  \end{eqnarray}
  where the final inequality follows because any operator
  $x\in(y,w)$ has positive inner product with $H$ and any operator
  $x\in(w,y)$ has negative inner product with $H$.  It follows
  that $z=r$, which completes this case.

  \textit{Case 1c:} The element $z$ is on the boundary and $H$ is
  positive semidefinite and rank-one.

  Let $y\in\mathbb{RP}^1$ be the unique element such that
  $\left<H,y\right>=0$.  Let
  \begin{eqnarray}
    g(x)&=&\left\{\begin{array}{ccl} 1 & \textnormal{ if }
                            & x\neq y \\
                            f(y) & \textnormal{ if } & x=y\, ,
                          \end{array} \right.
  \end{eqnarray}
  where $f\colon\mathbb{RP}^1\to[0,1]$ is a function such
  that~\eqref{eq:z} holds.  By similar reasoning as in Case~1b, this
  function also computes $z$.

  \textit{Case 2:} The element $z$ is in the interior of $\B(\mu)$.

  Let
  \begin{eqnarray}
    c&=&\int_{x\in\mathbb{RP}^1}(1/2)x\ \mathrm{d}\mu .
  \end{eqnarray}
  Since $z$ is interior it can be written as $z=tc+(1-t)b$,
  where $t\in[0,1]$ and $b$ is an element on the boundary of
  $\B(\mu)$.  Let $g$ be a two-step function which computes $b$,
  which must exist from the first part of the proof.  Then, the
  function $t/2+(1-t)g$ computes $z$. \qed

\subsection{Proof of Lemma~\ref{lem:leq2}}\label{app:leq2}
We will  construct an explicit curve which is the boundary
of (\ref{region1}).  Let $S=\{s_1,\ldots,s_n\}$ be the support
of $\mu$, where the points $\proj{0},s_1,\ldots,s_n$ are in
clockwise order, and define $\tilde{\rho}_m:=\sum_{i=1}^m\mu(s_i)s_i$.

For any $t\in[0,\left<H,\rho\right>]$, define a two-step function $h_t\colon\mathbb{RP}^1\to[0,1]$ as follows: if
\begin{eqnarray}
t\in\left[\left<\rho_m,H\right>,\left<\rho_{m+1},H\right>\right),
\end{eqnarray}
then
\begin{eqnarray}
h_t(x)&=&1\quad\text{for }x\in\left[\proj{0},s_{m+1}\right)\\
h_t(s_{m+1})&=&\left(\frac{t-\left<\rho_m,H\right>}{\mu(s_{m+1})\left<s_{m+1},H\right>}\right),
\end{eqnarray}
and $h_t$ is zero elsewhere. Note that, by construction,
\begin{eqnarray}
\int_{x\in\mathbb{RP}^1}h_t(x)\left<x,H\right>\ \mathrm{d}\mu&=&t.
\end{eqnarray}
Also define a zero-bias two-step function
$\overline{h}_t\colon\mathbb{RP}^1\to[0,1]$ by
\begin{eqnarray}
\overline{h}_t & = & \left\{ \begin{array}{ll} h_{\left( t + \left< \rho,H\right> /2 \right)}  - h_t   & \hskip0.3in \textnormal{ if } t < \left< \rho , H \right> /2 \\ \\
1 - h_t  + h_{\left(t - \left< \rho,H \right>/2\right)}   & \hskip0.3in \textnormal{ otherwise,}\\
\end{array} \right.
\end{eqnarray}
so that for any $t$,
\begin{eqnarray}
\int_{x \in \mathbb{RP}^1 } \overline{h}_t ( x )\left< x, H \right>\ \mathrm{d}\mu & = & \left< \rho, H \right>/2.
\end{eqnarray}
Let
\begin{eqnarray}
G ( t ) & = & \int_{x \in \mathbb{RP}^1} \overline{h}_t (x) x \ \mathrm{d} \mu.
\end{eqnarray}
The points in the image of $G ( t )$ are in the region (\ref{region1}) by construction,
and since they are obtained from zero-bias two-level functions, they lie
on the boundary of $\textnormal{Box} ( \mu )$ (see Proposition~\ref{2stepprop}).  
The image of $G$ is the boundary of (\ref{region1}).

Note that for any fixed $i$, the function $t \mapsto \overline{h}_t ( s_i )$ is
bitonic (in the sense that it only increases once and decreases once, modulo $\left< \rho , H \right>$) and thus
\begin{eqnarray}
\int_0^{\left< \rho, H \right>} \left| \deriv{t} \left(  \overline{h}_t ( s_i )  \right) \right| \mathrm{d}t \leq 2.
\end{eqnarray}
Therefore, the length of the curve $G$ satisfies
\begin{eqnarray}
\int_0^{\left< \rho, H \right>} \left\| \deriv{t} G ( t ) \right\|_1 \mathrm{d}t & = &
\int_0^{\left< \rho, H \right>} \left\| \deriv{t} \int_{x \in \mathbb{RP}^1} \overline{h}_t ( x ) x \mathrm{d} \mu  \right\|_1 \mathrm{d}t \\
& = & \int_0^{\left< \rho, H \right>} \left\| \deriv{t} \sum_{i=1}^n \overline{h}_t ( s_i ) s_i \mu ( s_i )  \right\|_1 \mathrm{d}t \\
& \leq & \int_0^{\left< \rho, H \right>} \sum_{i=1}^n \left| \deriv{t} \left( \overline{h}_t ( s_i ) \right)  \right|  \mu ( s_i )\ \mathrm{d}t \\
& \leq & \sum_{i=1}^n 2   \mu ( s_i ) = 2\, , 
\end{eqnarray}
as desired.\qed

\subsection{Tilt of the derivative of the steering ellipse}
\begin{lemma}\label{lem:perptilt}
  Suppose $\lambda,\mu\in\cR\cA(\mathbb{C}^2)$ with
  $\Tr(\lambda\mu)=0$ and $\Tilt(\mu)<1$.  Then $\Tilt(\lambda)>1$.
\end{lemma}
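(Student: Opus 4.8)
The plan is to pass to the Bloch coordinates introduced above and reduce the statement to a one-line application of the Cauchy--Schwarz inequality. Writing $\lambda = \frac12(n\id + r_1\sigma_1 + r_3\sigma_3)$ and $\mu = \frac12(m\id + s_1\sigma_1 + s_3\sigma_3)$, the first step is to compute $\Tr(\lambda\mu)$ using $\Tr(\id)=2$, $\Tr(\sigma_i)=0$ and $\Tr(\sigma_i\sigma_j)=2\delta_{ij}$; the cross terms drop out and one finds $\Tr(\lambda\mu) = \frac12(nm + r_1 s_1 + r_3 s_3)$. Thus the trace-orthogonality hypothesis is exactly Euclidean orthogonality of the vectors $(n,r_1,r_3)$ and $(m,s_1,s_3)$ in $\mathbb{R}^3$, and the two tilt conditions become statements about the double cone $\{r_1^2+r_3^2 \le n^2\}$ of half-angle $\pi/4$ about the $n$-axis: $\Tilt(\mu)<1$ places $\mu$ strictly inside it, while $\Tilt(\lambda)>1$ asserts $\lambda$ lies strictly outside.

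With this translation the core estimate is immediate. From $\Tr(\lambda\mu)=0$ I would write $nm = -(r_1 s_1 + r_3 s_3)$ and bound $|nm| = |r_1 s_1 + r_3 s_3| \le \sqrt{r_1^2+r_3^2}\,\sqrt{s_1^2+s_3^2}$ by Cauchy--Schwarz. The hypothesis $\Tilt(\mu)<1$ gives $\sqrt{s_1^2+s_3^2} < |m|$ (and in particular $m\neq 0$), so $|n|\,|m| \le \sqrt{r_1^2+r_3^2}\,\sqrt{s_1^2+s_3^2} < \sqrt{r_1^2+r_3^2}\,|m|$; dividing through by $|m|>0$ yields $|n| < \sqrt{r_1^2+r_3^2}$, i.e.\ $\Tilt(\lambda)>1$. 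Geometrically this is the self-duality of the $\pi/4$ cone: any vector orthogonal to $\mu$ makes an angle of at least $\pi/2 - \arctan\Tilt(\mu) > \pi/4$ with the $n$-axis.

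Since this is genuinely a short lemma, the only points needing care are the degenerate cases, and this is where I would concentrate the write-up. If $n=0$ then $\Tilt(\lambda)=\infty>1$ by the stated convention and there is nothing to prove, so I may assume $n\neq 0$. The one place the Cauchy--Schwarz step could fail to be \emph{strict} is when $\sqrt{r_1^2+r_3^2}=0$, i.e.\ $\lambda$ proportional to $\id$; but then orthogonality forces $nm=0$, contradicting $n\neq 0$ and $m\neq 0$, so this case cannot arise. Hence $\sqrt{r_1^2+r_3^2}>0$, the strict inequality above is justified, and the conclusion follows. The main (minor) obstacle is therefore bookkeeping the strictness and the $n=0$ convention rather than any real difficulty.
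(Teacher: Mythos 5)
Your proof is correct and follows essentially the same route as the paper's: both pass to Bloch coordinates, observe that $\Tr(\lambda\mu)=0$ is Euclidean orthogonality of $(n,r_1,r_3)$ and $(m,s_1,s_3)$, and apply Cauchy--Schwarz to conclude $r_1^2+r_3^2>n^2$. Your handling of the degenerate cases ($n=0$ and $\lambda$ proportional to $\id$) is, if anything, slightly more careful than the paper's.
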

\begin{proof}
  Suppose $\lambda=\frac{1}{2}(n\id+r_1\sigma_1+r_3\sigma_3)$ and
  $\mu=\frac{1}{2}(m\id+s_1\sigma_1+s_3\sigma_3)$ and write
  $(r_1,r_3)=r{\bf e}_r$ and $(s_1,s_3)=s{\bf e}_s$, where ${\bf
    e}_r,{\bf e}_s$ are unit vectors and $r,s\geq 0$. 

  The condition $\Tr(\lambda\mu)=0$ can be written
  $-{\bf r}.{\bf s}=nm$.  $\Tilt(\mu)<1$ is equivalent to $s^2<m^2$.
  It follows that $({\bf r}.{\bf s})^2=n^2m^2>n^2s^2$.  This
  rearranges to $({\bf r}.{\bf e_s})^2>n^2$, from which it follows
  that $r^2>n^2$, i.e., $\Tilt(\lambda)>1$.
\end{proof}
\begin{corollary}\label{cor:app1}
  Let $\rho_{AB} \in \cR\cD ( \mathbb{C}^2 \ot \mathbb{C}^2 )$ be a
  two-qubit state whose steering ellipse $\{ \tilde{\rho}_B ( \theta )
  \}$ has tilt smaller than $1$. Then $\Tilt(\deriv{\theta}
  \tilde{\rho}_B(\theta))>1$ for all $\theta$.
\end{corollary}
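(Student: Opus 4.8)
The plan is to deduce the statement directly from Lemma~\ref{lem:perptilt}, by observing that $\deriv{\theta}\tilde{\rho}_B(\theta)$ is a tangent vector to the steering ellipse and is therefore orthogonal, in the trace inner product, to the normal vector whose tilt defines the tilt of the ellipse.

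First I would make the affine structure of the ellipse explicit. Writing $\proj{\theta}=\frac{1}{2}(\id+\sin\theta\,\sigma_1+\cos\theta\,\sigma_3)$ and taking the partial trace against $\rho_{AB}$ gives
\begin{eqnarray}
\tilde{\rho}_B(\theta)=a+\cos\theta\cdot u+\sin\theta\cdot w,
\end{eqnarray}
for fixed operators $a,u,w\in\cR\cA(\mathbb{C}^2)$. The hypothesis that the tilt of the steering ellipse is finite (in particular $<1$) means, by the definition of that tilt, that the ellipse spans a $2$-dimensional affine space; hence $u$ and $w$ are linearly independent and span the direction space $V$ of that affine space. Differentiating, $\deriv{\theta}\tilde{\rho}_B(\theta)=-\sin\theta\cdot u+\cos\theta\cdot w$ lies in $V$ for every $\theta$, and is nonzero since $(-\sin\theta,\cos\theta)$ never vanishes and $u,w$ are independent.

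Next I would invoke the definition of the tilt of the steering ellipse. Let $H\in\cR\cA(\mathbb{C}^2)$ be a nonzero vector normal to the affine space containing the ellipse, so that $\langle H,v\rangle=0$ for every $v\in V$. By definition $\Tilt(H)$ equals the tilt of the steering ellipse, which is $<1$ by hypothesis. In particular, for every $\theta$ the tangent vector $\deriv{\theta}\tilde{\rho}_B(\theta)\in V$ satisfies $\Tr\!\left(H\,\deriv{\theta}\tilde{\rho}_B(\theta)\right)=\langle H,\deriv{\theta}\tilde{\rho}_B(\theta)\rangle=0$.

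Finally I would apply Lemma~\ref{lem:perptilt} with $\mu=H$ and $\lambda=\deriv{\theta}\tilde{\rho}_B(\theta)$: since $\Tr(\lambda\mu)=0$ and $\Tilt(\mu)<1$, the lemma yields $\Tilt(\lambda)>1$, which is exactly the claim. I do not expect a genuine obstacle here; the only point needing care is the geometric bookkeeping that the pointwise derivative really lands in the direction space $V$ orthogonal to $H$ and is nonzero, both of which follow from the nondegeneracy of the ellipse guaranteed by the tilt being finite.
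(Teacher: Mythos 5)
Your proposal is correct and follows exactly the route the paper intends: the corollary is stated immediately after Lemma~\ref{lem:perptilt} precisely because the tangent vector $\deriv{\theta}\tilde{\rho}_B(\theta)$ lies in the direction space of the affine plane of the ellipse and is therefore trace-orthogonal to the normal $H$ with $\Tilt(H)<1$, so the lemma applies. Your extra bookkeeping (nondegeneracy of the ellipse from the finiteness of its tilt, and nonvanishing of the derivative) is sound and fills in details the paper leaves implicit.
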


\subsection{A topological lemma}

\begin{lemma}
\label{toplemma}
Let $D = \{ z \in \mathbb{C} \mid \left| z \right| \leq 1 \}$ and let $S^1 = \{ z \in \mathbb{C} \mid \left| z \right| = 1 \}$.  Let
$F \colon D \to D$ be a continuous function such that for any $z \in S^1$, $F ( z ) = -z$.  Then, $F$ is onto.
\end{lemma}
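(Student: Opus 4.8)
The plan is to argue by contradiction, reducing the statement to the fact that the identity (equivalently, any degree $\pm 1$ map) on $S^1$ does not extend continuously to $D$; this is the no-retraction theorem, or equivalently the non-triviality of $\pi_1(S^1)$.

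First I would note that the boundary is automatically contained in the image. Since $F(z)=-z$ for every $z\in S^1$, we have $F(S^1)=S^1$, so $S^1\subseteq F(D)$. Consequently, if $F$ failed to be onto, any omitted point $p\in D\smallsetminus F(D)$ would have to lie in the open disc, i.e. $|p|<1$. This is the only place where the special boundary behavior of $F$ is used to locate a missed point in the interior.

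Next, from such an interior $p$ I would build a retraction of $D$ onto $S^1$ by radial projection away from $p$. Define $r_p\colon D\smallsetminus\{p\}\to S^1$ sending $w\neq p$ to the unique forward intersection of the ray $\{p+t(w-p)\mid t\geq 0\}$ with $S^1$. Because $|p|<1$ this intersection exists, is unique, and depends continuously on $w$; moreover for $w\in S^1$ the segment from the interior point $p$ to the boundary point $w$ exits the disc precisely at $w$, so $r_p(w)=w$. Since $F$ avoids $p$, the composite $g:=r_p\circ F\colon D\to S^1$ is well-defined and continuous on all of $D$, and on the boundary $g(z)=r_p(-z)=-z$ because $-z\in S^1$ and $r_p$ fixes $S^1$ pointwise. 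Thus $g|_{S^1}$ is the antipodal map $z\mapsto -z$.

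Finally I would extract the contradiction. The antipodal map $S^1\to S^1$ is a rotation by $\pi$, hence has degree $1$ and induces an isomorphism on $\pi_1(S^1)\cong\mathbb{Z}$, in particular a nonzero homomorphism. On the other hand $g|_{S^1}$ factors as $S^1\hookrightarrow D\xrightarrow{\,g\,}S^1$, and since $D$ is contractible we have $\pi_1(D)=0$, so this composite induces the zero map on $\pi_1$. A degree-$1$ self-map of $S^1$ cannot induce the zero homomorphism, so no such $g$ can exist, and therefore $F$ must be onto. The argument is short once the degree/fundamental-group machinery is invoked; the only step requiring genuine care is checking that the radial retraction $r_p$ is continuous and genuinely fixes $S^1$, so that $g|_{S^1}$ really is the (nontrivial) antipodal map, together with the observation that any missed point is forced into the interior.
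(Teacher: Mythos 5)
Your proof is correct and follows essentially the same route as the paper's: project radially away from an omitted (necessarily interior) point to obtain a continuous map $D\to S^1$ restricting to the antipodal map on the boundary, then contradict the nontriviality of $\pi_1(S^1)$. The paper phrases the final step as an explicit homotopy $H_\alpha(z)=G(\alpha z)$ from the negation map to a constant, while you invoke the factorization through the contractible disc; these are the same argument.
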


\begin{proof}
  Suppose, for the sake of contradiction, that
  $y\in D\smallsetminus F(D)$.  Let $G \colon D \to S^1$ be the
  (unique) function defined by the condition that for any $z \in D$,
  $F(z)$ lies on the line segment from $y$ to $G(z)$.  Note that the
  function $G$ also satisfies $G(z)=-z$ for $z\in S^1$.  The family of functions
  $\left\{ H_\alpha \colon S^1 \to S^1 \mid \alpha \in [ 0, 1 ]
  \right\}$
  given by $H_\alpha ( z ) = G ( \alpha z )$ is a continuous
  deformation between the negation map on $S^1$ and the constant map
  which takes $S^1$ to $G ( 0 )$.  This is impossible, since these
  maps represent different elements of the fundamental group of $S^1$.
  Thus, by contradiction, the original map $F$ must be onto.
\end{proof}

\subsection{Proof of Proposition~\ref{prop:yxcomp}}\label{app:yxcomp}

By Proposition~\ref{prop:sf}, we have
\begin{eqnarray}
\left| X \right|_Y & = & Y^{-1} \left| Y X Y \right| Y^{-1} \\
& = & \frac{ Y^{-1} ( Y X Y ) ( Y X Y ) Y^{-1} - Y^{-1} ( Y X Y ) (\hat{Y} \hat{X} \hat{Y} )
Y^{-1}}{ \left\| YXY \right\|_1 } \\
& = & \frac{ X Y^2 X - X \det ( Y ) \hat{X} \hat{Y} 
Y^{-1}}{ \left\| YXY \right\|_1 } \\
& = & \frac{ X Y^2 X - X  \hat{X} \hat{Y} 
( \det ( Y ) Y^{-1})}{ \left\| YXY \right\|_1 } \\
& = & \frac{ X Y^2 X - \det(X) \hat{Y} 
\hat{Y}}{ \left\| YXY \right\|_1 },
\end{eqnarray}
which is equal to the desired formula.

\end{document}